
\documentclass[11pt]{article}
\newcommand{\PAPER}[1]{#1}
\newcommand{\LIPICS}[1]{}

\PAPER{
\usepackage{fullpage,amsthm,amssymb,hyperref}

\title{Dynamic Geometric Data Structures via Shallow Cuttings}
\author{Timothy M. Chan\thanks{Department of Computer Science, University of Illinois at Urbana-Champaign (tmc@illinois.edu).  Work supported in part by NSF Grant CCF-1814026.}}

\newtheorem{lemma}{Lemma}[section]
\newtheorem{theorem}[lemma]{Theorem}
\newtheorem{corollary}[lemma]{Corollary}

}

\LIPICS{
\usepackage{microtype,hyperref}
   \hypersetup{%
      breaklinks,%
      ocgcolorlinks, colorlinks=true,%
      urlcolor=[rgb]{0.25,0.0,0.0},%
      linkcolor=[rgb]{0.5,0.0,0.0},%
      citecolor=[rgb]{0,0.2,0.445},%
      filecolor=[rgb]{0,0,0.4},
      anchorcolor=[rgb]={0.0,0.1,0.2}%
   }

\title{Dynamic Geometric Data Structures via Shallow Cuttings}
\author{Timothy M. Chan}{Department of Computer Science, University of Illinois at Urbana-Champaign, USA}{tmc@illinois.edu}
{}{Work supported in part by NSF Grant CCF-1814026.}
%{https://orcid.org/0000-0002-8093-0675}{}

\authorrunning{T.\,M. Chan}
\Copyright{Timothy M. Chan}

%\subjclass{
\ccsdesc[100]{Theory of computation~Computational geometry}%;
\ccsdesc[100]{Theory of computation~Data structures design and analysis}
%\ccsdesc[100]{Theory of computation~Randomness, geometry and discrete structures~Computational geometry}; 
%\ccsdesc[100]{Mathematics of computing~Discrete mathematics~Graph theory~Trees};
%\ccsdesc[100]{Human-centered computing~Visualization~Visualization techniques~Graph drawings}
%}

%\subjclass{F.2.2 Nonnumerical Algorithms and Problems}
\keywords{dynamic data structures, convex hulls, nearest neighbor search, closest pair, shallow cuttings}

\relatedversion{A full version of this paper is available at \url{XXX}}

\EventEditors{Gill Barequet and Yusu Wang}
\EventNoEds{2}
\EventLongTitle{35th International Symposium on Computational Geometry (SoCG 2019)}
\EventShortTitle{SoCG 2019}
\EventAcronym{SoCG}
\EventYear{2019}
\EventDate{June 18--21, 2019}
\EventLocation{Portland, United~States}
\EventLogo{socg-logo}
\SeriesVolume{129}
\ArticleNo{24} % <- Your article number goes here

\theoremstyle{plain}

}

\newcommand{\R}{\mathbb{R}}
\newcommand{\eps}{\varepsilon}

\newcommand{\LE}{\mbox{\rm LE}}
\newcommand{\UH}{\mbox{\rm UH}}
\newcommand{\D}{\Delta}

\begin{document}
\maketitle

\begin{abstract}
We present new results on a number of fundamental problems about dynamic geometric data structures: 

\begin{enumerate}
\item We describe the first fully dynamic data structures with sublinear amortized update time for maintaining (i)~the number of vertices or the volume of the convex hull of a 3D point set, (ii)~the largest empty circle for a 2D point set, (iii)~the Hausdorff distance between two 2D point sets, (iv)~the discrete 1-center of a 2D point set, (v)~the number of maximal (i.e., skyline) points in a 3D point set.  The update times are near $n^{11/12}$ for (i) and (ii), $n^{7/8}$ for (iii) and (iv), and $n^{2/3}$ for (v).  Previously, sublinear bounds were known only for restricted ``semi-online'' settings [Chan, SODA 2002].
\LIPICS{\medskip}
\item We slightly improve previous fully dynamic data structures for answering extreme point queries for the convex hull of a 3D point set and nearest neighbor search for a 2D point set.  The query time is $O(\log^2n)$, and the amortized update time is $O(\log^4n)$ instead of $O(\log^5n)$ [Chan, SODA 2006; Kaplan et al., SODA 2017].
\LIPICS{\medskip}
\item We also improve previous fully dynamic data structures for maintaining the bichromatic closest pair between two 2D point sets and the diameter of a 2D point set.  The amortized update time is $O(\log^4n)$ instead of $O(\log^7n)$ [Eppstein 1995; Chan, SODA 2006; Kaplan et al., SODA 2017].
\end{enumerate}
\end{abstract}

\section{Introduction}

\LIPICS{\subparagraph*{Background.}}
\PAPER{\paragraph{Background.}}
\emph{Dynamic\/} data structures that can support insertions and deletions of data have been a fundamental topic in computational geometry since the beginning of the field.  For example, in 1981 an early landmark paper by Overmars and van Leeuwen~\cite{OveLee} presented a fully dynamic data structure for \emph{2D convex hulls\/} with $O(\log n)$ query time and $O(\log^2n)$ update time; the $\log^2n$ bound was later improved in a series of work \cite{FOCS99,BroJac,SoCG11} for various basic types of hull queries,
e.g., finding extreme points along given directions.

One of the key results in the area is the author's fully dynamic
data structure for \emph{3D convex hulls}~\cite{SODA06}, which was the
first to achieve polylogarithmic query and update time for basic types of hull queries.
The original solution required $O(\log^2n)$ query time for extreme point
queries, and $O(\log^6n)$ amortized update time.  (A previous solution
by Agarwal and Matou\v sek~\cite{AgaMat95} had $O(n^\eps)$ query or update time for an arbitrarily small constant $\eps>0$.)
Recently Kaplan et al. \cite{KapSODA17} noted a small modification of the data structure, improving the update time to $O(\log^5n)$.
The result has numerous applications, including dynamic \emph{2D nearest or farthest neighbor search\/} (by the standard lifting map).
Another application is dynamic \emph{2D bichromatic closest pair\/} (i.e., computing
$\min_{p\in P}\min_{q\in Q}\|p-q\|$ for two planar point sets $P$ and $Q$) or
dynamic \emph{2D diameter} (i.e., computing $\max_{p\in P}\max_{q\in P}\|p-q\|$
for a planar point set $P$):
Eppstein~\cite{Epp} gave a clever, general technique reducing
dynamic closest/farthest pair problems to dynamic nearest/farthest
neighbor search, which increased the update time by a $\log^2n$ factor;
when combined with the above, this yielded
an $O(\log^7n)$ update time bound.

For many other problems, polylogarithmic update time appears more difficult, and getting
sublinear update time is already challenging.
For example, in SoCG 2001, the author~\cite{SoCG01} obtained
a dynamic data structure for the \emph{width\/} of a 2D point set with
$O^*(\sqrt{n})$ amortized update time.\footnote{
Throughout the paper, we use the $O^*$ notation to hide small
extra factors that are polylogarithmic, or in some cases, $o(n^\eps)$
for an arbitrarily small constant $\eps>0$.
}
(Part of the difficulty is
that the width problem is neither ``decomposable'' nor ``LP-type''.)
Sublinear update time is known for a few other assorted geometric problems, such as \emph{dynamic connectivity\/} for the intersection graph of geometric objects~\cite{CPR}.

In SODA 2002, the author~\cite{SODA02} explored still more challenging dynamic geometric problems, including maintaining
\LIPICS{\medskip}
\begin{enumerate}
\item[(i)] the number of vertices and facets of a 3D convex hull,
or its volume, 
\item[(ii)] the \emph{largest empty circle\/} for a 2D point set (with
center restricted to be inside a fixed triangle), 
\item[(iii)] the \emph{Hausdorff
distance\/} for 2D point sets $P$ and $Q$ (i.e., computing $\max_{q\in Q}\min_{p\in P}\|p-q\|$
for two planar point set), and 
\item[(iv)] the
\emph{discrete 1-center\/} of a 2D point set $P$ (i.e., computing $\min_{q\in P}\max_{p\in P}\|p-q\|$).
\end{enumerate}
\LIPICS{\medskip}
The paper~\cite{SODA02} obtained sublinear results only for the \emph{insertion-only\/} case and the \emph{off-line\/} case
(where we are given the entire update sequence in advance), or a generalization of both---the \emph{semi-online\/} case (as defined by Dobkin and Suri~\cite{DobSur},
where we are given the deletion time of an element when it is inserted).
The update time bounds were $O^*(n^{7/8})$ for (i) and (ii), and
$O^*(n^{5/6})$ for (iii) and (iv).  

None of these four problems are ``decomposable''.  In particular, problem (i) is nontrivial since known
methods such as~\cite{SODA06} for 3D convex hull queries  do not
maintain the global hull explicitly, unlike Overmars and van Leeuwen's original data structure for 2D convex hulls.  Problem (ii) also seems to require explicit
maintenance of a 3D convex hull (lifted from
the 2D farthest-point Voronoi diagram).
Problems (iii) and (iv) are max-min or min-max problems, and lack the
symmetry of min-min and max-max problems that enable Eppstein's technique.
For all these problems, the fully dynamic case has remained open.

\PAPER{\paragraph{New results.}}
\LIPICS{\subparagraph*{New results.}}\

\LIPICS{\medskip}
\begin{enumerate}
\item We present the first fully dynamic data structures with sublinear update time for Problems (i)--(iv).  The amortized update time bounds are $O^*(n^{11/12})$ for (i) and (ii), and
$O^*(n^{5/6})$ for (iii) and (iv).  

The approach is general enough to be applicable to many more problems; for example, we can maintain the number of
maximal or ``skyline'' points (points that are not dominated by other points) in a 3D point set in $O^*(n^{2/3})$ amortized time.
%, and can maintain the 
%largest-area Delaunay triangle of a 2D point set in sublinear time.

\LIPICS{\medskip}
\item For basic 3D convex hull queries (e.g., extreme point queries) and 2D nearest neighbor search, as mentioned, Kaplan et al.~\cite{KapSODA17} have lowered the amortized update time of the author's fully dynamic data structure~\cite{SODA06}, from $O(\log^6n)$ to $O(\log^5n)$.
We describe a further logarithmic-factor improvement, from $O(\log^5n)$ to $O(\log^4n)$.  

Although this improvement is admittedly small,
the importance of the result stems from its many applications~\cite{SODA06}; for example,
%we can now maintain the \emph{smallest enclosing circle\/} of a 2D point
%set in $O(\log^4n (\log\log n)^{O(1)})$ time, and 
we can now compute
the \emph{convex (or onion) layers\/} 
of a 3D point set in $O(n\log^4n)$ time, and the \emph{$k$-level\/} in 
an arrangement of planes in 3D in $O(n\log n + f\log^4n)$ time where
$f$ is the output size.

\LIPICS{\medskip}
\item For bichromatic closest pair and diameter in 2D, combining
Eppstein's technique~\cite{Epp} with the above new result on dynamic nearest neighbor search already gives a slightly improved amortized update time of $O(\log^6n)$.  We describe a further, more substantial improvement that eliminates the two extra logarithmic factors caused by Eppstein's technique~\cite{Epp}.
The new update time bound is $O(\log^4n)$.

Dynamic bichromatic closest pair has applications to other problems.
For example, we can now maintain the Euclidean minimum spanning 
tree of a 2D point set with $O(\log^6n)$ amortized update time 
by using another reduction of Eppstein~\cite{Epp} combined with known results
for dynamic minimum spanning trees for graphs~\cite{HolESA15}.
\end{enumerate}

\PAPER{\paragraph{Techniques.}}
\LIPICS{\subparagraph*{Techniques.}}
The common thread in all of our new methods is the use of
\emph{shallow cuttings}: Let $H$ be a set of $n$ hyperplanes in $\R^d$.
The \emph{level\/} of a point $q$ refers to the number of hyperplanes of $H$
strictly below $q$. 
A \emph{$(k,K)$-shallow cutting\/} is a collection of cells covering 
all points of level at most $k$, such that each cell intersects at most $K$ hyperplanes.  The \emph{conflict list\/} $H_\D$ of a cell $\D$ refers to the subset of all hyperplanes of $H$ intersecting $\D$.

Matou\v sek~\cite{MatCGTA92} proved the existence of shallow cuttings with
small number of cells.  Specifically, in 3D, the main lemma can be stated as follows:\footnote{Matou\v sek's original formulation in $\R^d$ states the existence
of a $(k,n/r)$-shallow cutting with $O(r^{\lfloor d/2\rfloor}(1+kr/n)^{\lceil d/2\rceil})$ cells.}

\begin{lemma}\label{lem:shacut}
\emph{(Shallow Cutting Lemma)}\ \ 
Given a set $H$ of $n$ planes in $\R^3$ and a parameter $k\in [1,n]$, 
there exists a $(k,O(k))$-shallow cutting with
$O(n/k)$ cells, where each cell is a ``downward'' tetrahedron
containing $(0,0,-\infty)$. The cutting, together with
the conflict lists of all its cells, can be
constructed in $O(n\log n)$ time.
\end{lemma}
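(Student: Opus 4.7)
The plan is to establish the lemma using the standard Clarkson--Shor random-sampling paradigm, specialized to planes in $\R^3$ and adapted to the ``downward tetrahedron'' structure required here. I would take a random sample $R\subseteq H$ of size $r = \Theta(n/k)$, compute the $(\leq c')$-level of the arrangement $A(R)$ for a suitably chosen constant $c'$, and triangulate this region into triangular faces. Each such triangle is then extended vertically downward to $z=-\infty$, producing an unbounded prism that is a downward tetrahedron containing $(0,0,-\infty)$. By Clarkson--Shor, the $(\leq c')$-level of $r$ planes in $\R^3$ has combinatorial complexity $O(r) = O(n/k)$, which matches the required cell count.

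Next I would verify the two defining properties of a shallow cutting. For coverage, the Clarkson--Shor ``level'' argument gives that, for $r = cn/k$ with $c$ large enough relative to $c'$, a point $q$ at level $\leq k$ in $H$ has expected level $\Theta(c)$ in $R$, and a Chernoff/union-bound argument over the combinatorially distinct regions of $A(H)$ shows that with high probability \emph{every} such $q$ has level at most $c'$ in $R$, hence lies in one of our downward cells. For conflict-list size, I would invoke the Chazelle--Friedman exponential-tail refinement of Clarkson--Shor, which implies that with positive probability every cell $\D$ has conflict list of size $|H_\D| = O(k)$; a constant expected number of resamplings makes both events hold simultaneously.

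For the $O(n\log n)$ time bound, I would compute the $(\leq c')$-level of $A(R)$ in $O(r\log r)$ time via the standard algorithms for shallow levels of planes in $\R^3$, and then assemble the conflict lists by walking each plane $h\in H$ through the triangulated decomposition, recording the cells it intersects. The total conflict-list work is $O(\sum_\D |H_\D|) = O(n)$ by Clarkson--Shor, with an $O(n\log n)$ overhead for the initial sampling, point location, and (if needed) resampling.

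The main obstacle will be the worst-case conflict-list bound: the expected \emph{total} conflict list size is $O(n)$, so the \emph{average} cell already has $|H_\D|=O(k)$, but upgrading this average guarantee to a uniform per-cell guarantee—without inflating the $O(n/k)$ cell count—is precisely what the Chazelle--Friedman exponential-decay bound buys us, and is the technical heart of the proof. Once that is in hand, the remaining ingredients (linear complexity of shallow levels in $\R^3$, trivial coverage via the downward extension, and the standard time analysis) fit together cleanly.
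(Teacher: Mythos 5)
The paper does not actually prove Lemma~\ref{lem:shacut}: it cites Matou\v sek~\cite{MatCGTA92} for existence, Ramos~\cite{Ram} for the first $O(n\log n)$ randomized construction, and Chan--Tsakalidis~\cite{ChaTsa} for the deterministic construction. So you are attempting a proof of something the paper treats as a black box; the comparison below is against the standard known proofs rather than against text in the paper.

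Your coverage argument has a genuine gap. You want to argue that, for a random sample $R$ of size $r=\Theta(n/k)$ and a \emph{constant} $c'$, every point $q$ at level at most $k$ in $H$ has level at most $c'$ in $R$. For a single $q$ with level $\le k$ in $H$, the expected level of $q$ in $R$ is $\Theta(1)$, so Chernoff gives only a constant failure probability $e^{-\Omega(c')}$. To union-bound over the combinatorially distinct positions of $q$ (vertices/faces of $A(H)$ at level $O(k)$, of which there are $\Theta(nk^2)$, or more crudely $O(n^3)$ faces of the whole arrangement), you would need $c'=\Omega(\log n)$. But then the $(\le c')$-level of $R$ has combinatorial complexity $\Theta(r\,c'^{\,2}) = \Theta((n/k)\log^2 n)$, which is too many cells --- the lemma requires $O(n/k)$. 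So with constant $c'$ the coverage claim is false (there will be low-level points of $H$ not covered, with constant probability per point and far too many candidate points), and with $c'=\Omega(\log n)$ the cell count overshoots. This is exactly why the shallow cutting lemma is not a one-shot sampling consequence: Matou\v sek's proof (and Ramos' and Chan--Tsakalidis's constructions) circumvent it, e.g.\ by restricting attention to the cells of a cutting that actually meet the $(\le k)$-level and bounding their number via Clarkson--Shor counting of low-level vertices, or by iterative refinement across a hierarchy of levels, rather than by taking a fixed shallow level of a single sample and hoping it covers. Your Chazelle--Friedman step for the conflict-list bound is the right tool for that half, but it cannot rescue the coverage half, and the ``constant number of resamplings'' remark presupposes that each of the two events holds with probability bounded away from $0$, which the coverage event does not under your parameterization.

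A secondary but real point: even granting existence, the lemma also asserts an $O(n\log n)$-time construction including conflict lists. Your sketch (``compute the $(\le c')$-level of $A(R)$'' plus ``walk each plane through the decomposition'') glosses over the fact that deterministically locating a plane's cells and building all conflict lists within $O(n\log n)$ total time is itself the content of~\cite{Ram,ChaTsa}; it is not a routine step. The paper in fact leans on the Chan--Tsakalidis machinery again in Lemma~\ref{lem:refine:cut} and the appendix precisely because this construction is nontrivial.
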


The construction time was first shown by Ramos~\cite{Ram} with
a randomized algorithm.  Later, Chan and Tsakalidis \cite{ChaTsa}
obtained the first $O(n\log n)$-time deterministic algorithm.

To see how static shallow cuttings may be useful for dynamic geometric data structures, observe that most of the
problems considered here are related to the lower envelope of a dynamic
set of planes in $\R^3$ (via duality or the standard lifting transformation).
Usually, the bottleneck lies in
deletions rather than insertions.  
Basically, a shallow cutting provides a compact implicit representation of the
$(\le k)$-level, which is guaranteed to cover the lower envelope even when
up to $k$ deletions have occurred.

A further idea behind all our solutions is to classify planes into two 
types, those that intersect few cells of the shallow cutting, and those that intersect many cells.  The latter type of planes may be bad in slowing down updates, but the key observation is that there can't be too many bad elements.

The new sublinear solutions to Problems (i)--(iv), described 
in Sections~\ref{sec:sublin}--\ref{sec:sublin2}, are obtained by 
incorporating the shallow cutting idea with the previous techniques from
\cite{SODA02}, based on periodic rebuilding.
The entire solution is conceptually not complicated at all, and the description for Problem (i) fits in under two pages, assuming the availability of known range searching structures.
As are typical in other works on data structures with sublinear update time
with ``funny'' exponents,
parameters are judiciously chosen to balance several competing costs. 

The shallow cutting idea has actually been exploited before in dynamic data structures
for basic 3D convex hull queries: Agarwal and Matou\v sek~\cite{AgaMat95} used shallow cuttings recursively (which caused some loss of efficiency), while the author~\cite{SODA06} used a hierarchy of
shallow cuttings, for logarithmically many values of $k$.
The above application of shallow cuttings to Problems (i)--(iv) is even more elementary---we only need a single cutting.  (This makes it all the more embarassing that the idea was missed till now.)

For basic 3D convex hull queries and 2D nearest neighbor search, our improvement is
less innovative.  Described in Section~\ref{sec:dch3d} (which can be read
independently of the previous sections), it is based on the author's original data structure~\cite{SODA06}, with Kaplan et al.'s logarithmic-factor improvement~\cite{KapSODA17}, plus one extra idea to remove
a second logarithmic factor: the main observation is that
Chan and Tsakalidis's algorithm for shallow cuttings~\cite{ChaTsa} already
 constructs an entire hierarchy of  $O(\log n)$ cuttings in $O(n\log n)$ time, not just a single cutting.
However, the hierarchy needed for the data structure in~\cite{SODA06} requires some planes be pruned as we go from one cutting to the next,
so Chan and Tsakalidis's algorithm cannot be applied immediately.  
Still, we show that some nontrivial but technical changes 
(as explained in the appendix) 
can fix the problem.

For 2D bichromatic closest pair and diameter, our $\log^2n$-factor
improvement, described in Section~\ref{sec:bcp}, is a bit
more interesting.  We still do not know how to improve Eppstein's general
reduction~\cite{Epp} from dynamic closest pair to dynamic nearest neighbor search, but intuitively the blind combination of Eppstein's technique
with the author's dynamic data structure for 2D nearest neighbor search seems wasteful, since both share some commonalities (both are sophisticated
variants of the \emph{logarithmic method}~\cite{BenSax}, and both handle deletions via re-insertions of elements into smaller subsets).  To avoid the redundancy, we show how to directly
modify our dynamic data structure for 2D nearest neighbor search to solve the
dynamic 2D bichromatic closest pair problem.  The resulting modification 
completely bypasses Eppstein's ``conga line''
structure~\cite{Epp,Epp00}, and turns out to cause no increase to the $O(\log^4n)$ bound.

\section{Dynamic 3D Convex Hull Size}\label{sec:sublin}

\newcommand{\Hlight}{H_0}%{H_{\mbox{\scriptsize\rm low}}}
\newcommand{\Vlight}{V_0}%{V_{\mbox{\scriptsize\rm low}}}
\newcommand{\Elight}{E_0}%{E_{\mbox{\scriptsize\rm low}}}
\newcommand{\Qlight}{Q_0}%{Q_{\mbox{\scriptsize\rm low}}}
\newcommand{\Hrem}{H_{\mbox{\scriptsize\rm bad}}}
\newcommand{\Trem}{T_{\mbox{\scriptsize\rm bad}}}
\newcommand{\Qrem}{Q_{\mbox{\scriptsize\rm bad}}}
\newcommand{\Jrem}{J_{\mbox{\scriptsize\rm bad}}}

We begin with our new sublinear-time fully dynamic data structure
for maintaining the
number of vertices/facets of the convex hull of a dynamic 3D point set.
The solution is based on the use of shallow cuttings (Lemma~\ref{lem:shacut}) and the author's
previous semi-online data structure~\cite{SODA02}.

\begin{theorem}\label{thm:ch3dsize}
We can maintain the number of vertices, edges, and facets for
the convex hull of a dynamic set of $n$ points in $\R^3$, in general
position, with $O^*(n)$ preprocessing time
and $O^*(n^{11/12})$ amortized insertion and deletion time.
\end{theorem}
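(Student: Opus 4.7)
The plan is to combine periodic rebuilding with a single shallow cutting, in the spirit of the author's semi-online framework of~\cite{SODA02}. By standard point-plane duality, maintaining the numbers of vertices, edges, and facets of the 3D convex hull is equivalent to maintaining the combinatorial complexity of the lower and upper envelopes of a dynamic set $H$ of $n$ planes in $\R^3$; I will describe the lower envelope, with the upper envelope handled symmetrically. Pick a rebuild period $r$ and a shallow-cutting parameter $k\ge r$, both to be optimized at the end. Every $r$ updates I rebuild: invoke Lemma~\ref{lem:shacut} to construct a $(k,O(k))$-shallow cutting $\Xi$ of $H$ with $O(n/k)$ cells in $O(n\log n)$ time, compute the current envelope complexity, and precompute for each cell $\D\in\Xi$ a local contribution $c_\D$ under some canonical pinning convention (for instance, assigning each envelope vertex to the unique cell containing it, with analogous rules for edges and facets) so that $\sum_\D c_\D$ equals the true complexity. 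The key invariant is that because $k\ge r$, $\Xi$ continues to cover the lower envelope throughout the next $r$ updates: deletions only lower the level of each point, and insertions are accounted for separately via buffers.

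Between rebuilds I maintain buffers of at most $r$ inserted and deleted planes. After each update I identify the cells whose local contribution must change, recompute each such $c_\D$ from scratch in $O^*(k)$ time using the cell's current active plane set, and update the running sum. To control the number of affected cells I classify the planes: following the hint in the introduction, call a plane \emph{light} if it intersects at most $K$ cells of $\Xi$ and \emph{bad} otherwise, where $K$ is a third parameter. Since the total plane-cell incidence is $\sum_\D |H_\D|=O(n)$, at most $O(n/K)$ planes of $H$ are bad at rebuild time, and a light-plane update touches at most $K$ cells for a total cost of $O^*(Kk)$. The bad planes, together with any newly inserted planes that happen to intersect many cells of $\Xi$, are offloaded to an auxiliary sub-structure. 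Because the auxiliary sub-structure has a known lifetime of at most $r$ updates---exactly the remainder of the current rebuild window---it fits the semi-online model, so the semi-online data structure of~\cite{SODA02} can be applied to its $O(n/K)$ inputs.

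The amortized cost per update then decomposes into the rebuild cost $O^*(n/r)$, the light-cell cost $O^*(Kk)$, and the auxiliary semi-online cost on $O(n/K)$ elements; choosing $r$, $k$, $K$ to balance these three summands yields the claimed $O^*(n^{11/12})$ bound. The main technical obstacle I expect is the local-to-global bookkeeping: ensuring that $\sum_\D c_\D$ really equals the envelope complexity at every moment despite features shared across cell boundaries, and that the interaction between the semi-online auxiliary structure and the cell-level accounting is handled consistently across rebuild boundaries. The shallow cutting buys just enough headroom---keeping the lower envelope inside $\Xi$ for all $r$ subsequent updates---to convert the semi-online guarantee of~\cite{SODA02} into a fully dynamic one, at the cost of a slightly worse exponent than the semi-online $7/8$.
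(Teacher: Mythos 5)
Your high-level setup --- dualize, divide updates into phases, build one shallow cutting per phase, and split the planes into ``light'' (intersecting few cells) and ``bad'' (intersecting many) --- is exactly what the paper does. But the proposal leaves the central difficulty unsolved and, indeed, proposes a mechanism that cannot work. You want to keep per-cell local contributions $c_\D$ whose sum is the envelope complexity of $\LE(H)$, while offloading the bad planes to a separate semi-online structure. The trouble is that the vertex count of $\LE(H_0 \cup \Hrem)$ does not decompose into ``contributions from $H_0$'' plus ``contributions from $\Hrem$'': a bad plane clips off vertices and edges of $\LE(\Hlight)$ across arbitrarily many cells, and its intersections with edges of $\LE(\Hlight)$ create new envelope vertices that belong to neither structure alone. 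A semi-online instance of~\cite{SODA02} on the bad planes only maintains the complexity of $\LE(\Hrem)$; it tells you nothing about how $\Hrem$ and $\Hlight$ interact. So the ``local-to-global bookkeeping'' you defer is not bookkeeping --- it is the entire problem, and the additive $c_\D$-plus-auxiliary split is structurally the wrong decomposition.

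The paper's actual fix is to never try to maintain a running total. Instead it keeps the vertex set $\Vlight$ and edge set $\Elight$ of $\LE(\Hlight)$ (restricted to the cutting) in \emph{dynamic range/intersection-searching} structures: simplex range counting over $\Vlight$ with $O^*(n^{2/3})$ query time, and triangle--segment intersection counting over $\Elight$ with $O^*(n^{3/4})$ query time, each supporting $O^*(1)$ updates. A light-plane deletion causes only $O(kn/s)$ changes to $\Vlight,\Elight$. To answer a query, the paper builds $\LE(\Hrem)$ from scratch (it has only $O(s+k)$ faces), triangulates it, and for each triangle counts the $\Vlight$-vertices directly below it and the $\Elight$-edges crossing it, plus ray-shooting tests for the vertices and edges of $\LE(\Hrem)$ against $\LE(\Hlight)$. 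That query cost, $O^*((s+k)n^{3/4})$, is the dominant term driving the $n^{11/12}$ exponent once $s=k^2$ and $k=n^{1/12}$; your analysis has no analog of it, which is why your balancing seems to give a better exponent --- a sign that a cost has gone unaccounted for. In short, you are missing the idea of replacing per-cell tallying with range-searching queries against a small explicitly constructed $\LE(\Hrem)$.
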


\begin{proof}
It suffices to maintain the number of convex hull facets, which determines
the number of vertices and edges (assuming general position).
It suffices to compute the number of upper hull facets, since by symmetry we can compute the number of lower hull facets.
We describe our solution in dual space, where the problem is to
compute the number of vertices in $\LE(H)$ for
a dynamic set $H$ of $n$ planes in $\R^3$.

Let $k$ and $s$ be parameters to be set later.
We divide the update sequence into phases of $k$ updates each.
We maintain a decomposition of the set $H$ into a deletion-only set
$\Hlight$ and a small set $\Hrem$ of ``bad'' planes.

\medskip\noindent{\bf Preprocessing for each phase.}
At the beginning of each phase,
we construct a $(k,O(k))$-shallow cutting $\Gamma$ of $H$
with $O(n/k)$ cells,
together with all their conflict lists, by Lemma~\ref{lem:shacut}.
We set 
\[ \Hlight=\{h\in H: \mbox{$h$ intersects at most $n/s$ cells}\}
\mbox{\ \ and\ \ } \Hrem=H-\Hlight.
\]
Since the total conflict list size is $O(n/k\cdot k)=O(n)$, we have 
$|\Hrem|=O(s)$.

Let $\Vlight$ and $\Elight$ be
the set of vertices and edges of the portion of $\LE(\Hlight)$ 
covered by $\Gamma$, respectively.  There are $O(k)$ such vertices and edges per cell of $\Gamma$, and hence, $|\Vlight|,|\Elight|=O(n/k\cdot k)=O(n)$.
We preprocess $\Vlight$ and $\Elight$ in $O^*(n)$ time by known range searching and intersection searching
techniques, so that 
\LIPICS{\medskip}
\begin{itemize}
\item
we can count the number of points in $\Vlight$
inside a query tetrahedron in $O^*(n^{2/3})$ time (this is 3D simplex range searching)~\cite{Mat92,SoCG10,AgaSURV};
\item
we can count the number of line segments in $\Elight$
intersecting a query triangle in $O^*(n^{3/4})$ time (as noted in~\cite{SODA02}, we can first
solve the case of lines and query halfplanes in $\R^3$ using semialgebraic range searching \cite{AgaMat94}
in Pl\"ucker space,
and then extend the solution for line segments and query triangles
by a multi-level data structure \cite{AgaSURV}).
\end{itemize}
\LIPICS{\medskip}
These data structures can support insertions and deletions of points in $\Vlight$ and line segments in $\Elight$ in $O^*(1)$ time each.
In addition, we preprocess $\Hlight$ in a known dynamic lower envelope data structure
in $O^*(n)$ time, to support ray shooting queries in $\LE(\Hlight)$ in $O^*(1)$ time and deletions in $O^*(1)$ time (e.g., see \cite{AgaMat95} or Section~\ref{sec:dch3d}).
The total preprocessing time per phase is $O^*(n)$.  Amortized over $k$ updates, the cost is $O^*(n/k)$.

\medskip\noindent{\bf Inserting a plane $h$.}
We just insert $h$ to the list $\Hrem$.  Note that $|\Hrem|=O(s+k)$ at all times, since there are at most $k$ insertions per phase.

\medskip\noindent{\bf Deleting a plane $h$ from $\Hrem$.}
We just remove $h$ from the list $\Hrem$.

\medskip\noindent{\bf Deleting a plane $h$ from $\Hlight$.}
We consider
each cell $\D\in\Gamma$ intersected by $h$, and compute
$\LE((\Hlight)_\D)$ from scratch 
in $O(k\log k)$ time (since $|(\Hlight)_\D|=O(k)$).
As the number of cells intersected by $h$ is at most $n/s$,
this computation requires $O^*(kn/s)$ total time.  
The sets $\Vlight$ and $\Elight$ undergo at most $O(kn/s)$ changes,
and their associated data structures can be updated in $O^*(kn/s)$ time.
%The ray shooting structure for $\LE(\Hlight)$ can be updated 
%in $O^*(1)$ time.

\medskip\noindent{\bf Computing the answer.}
To compute the number of vertices of $\LE(H)=\LE(\Hlight\cup\Hrem)$,
we first construct $\LE(\Hrem)$ in $O((s+k)\log(s+k))$ time, and 
triangulate all its $O(s+k)$ faces.  For each triangle $\tau$
in this triangulation:  
\LIPICS{\medskip}
\begin{itemize}
\item 
we count the number of vertices of $\Vlight$ that lie directly below $\tau$, in $O^*(n^{2/3})$ time; and
\item we
count the number of edges of $\Elight$ that intersect $\tau$, in $O^*(n^{3/4})$ time.
\end{itemize}
\LIPICS{\medskip}
We sum up all these counts.
In addition, for each edge of $\LE(\Hrem)$, we test whether
it intersects $\LE(\Hlight)$ by ray shooting in $O^*(1)$ time, and increment the count if true.
For each vertex of $\LE(\Hrem)$, we test whether it is underneath
$\LE(\Hlight)$ by vertical ray shooting in $O^*(1)$ time, and increment the count if true.
Note that $\LE(H)$ is covered by $\Gamma$ at all times, since there are at most $k$ deletions per phase.
The overall count thus gives the answer.  The total time to compute the answer is $O^*((s+k)n^{3/4})$.

\medskip\noindent{\bf Analysis.}
The overall amortized update time is
\[ O^*(n/k + kn/s + (s+k)n^{3/4}).
\]
The theorem follows by setting $s=k^2$ and $k=n^{1/12}$.
\end{proof}

The preprocessing time can be made $O(n\log n)$ and space made $O(n)$ by increasing the update time by an $n^\eps$ factor, via known trade-offs
for range/intersection searching (with larger-degree partition trees).
The method can be deamortized, using existing techniques~\cite{OveBOOK}.

The same method can be adapted to maintain
the sum or maximum of $f(v)$ over all vertices $v$ of $\LE(H)$, for a general class of functions $f$.
Instead of range counting, we store the set $\Vlight$ of points for range sum or range maximum queries
(which have similar complexity as range counting).  For the set $\Elight$ of line segments, the base level of its
multi-level data structure requires data structures ${\cal S}_L$ for
each canonical subset $L$ of lines in $\R^3$, so that we
can return the sum or maximum of $f(\ell\cap h)$ over all $\ell\in L$
for a query plane $h$
in $O^*(|L|^\alpha)$ time, supporting insertions and deletions in $L$ in $O^*(1)$ time.  If $\alpha\le 3/4$, the final time bound of our algorithm remains $O^*(n^{11/12})$.

\begin{theorem}
We can maintain the volume of the convex hull for
a dynamic set of $n$ points in $\R^3$,
with $O^*(n)$ preprocessing time
and $O^*(n^{11/12})$ amortized insertion and deletion time.
\end{theorem}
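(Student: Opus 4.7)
The plan is to apply the generalization of Theorem~\ref{thm:ch3dsize} sketched in the preceding paragraph (range sums in place of counts), choosing a per-vertex function $f$ so that $\sum_{v\in\text{vert}(\LE(H))} f(v)$ equals the signed contribution of the upper hull to the polytope volume; the lower hull is handled symmetrically via $\UH(H)$.

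By the divergence theorem, the volume of a convex polytope is the signed sum, over its facets, of the oriented tetrahedra joining each facet to the origin. Dually, each vertex $v$ of $\LE(H)$ corresponds to an upper-hull facet whose three primal vertices $p_1(v), p_2(v), p_3(v)$ are the duals of the three planes incident to $v$; I therefore set
\[
f(v) \;=\; \tfrac{1}{6}\bigl(p_1(v)\times p_2(v)\bigr)\cdot p_3(v),
\]
with an orientation sign chosen consistently from the facet's outward normal. Summing $f$ over the vertices of $\LE(H)$ and $\UH(H)$ then yields the total convex hull volume.

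The crux is to verify that this $f$ fits the $\alpha\le 3/4$ interface required by the generalized framework. For the vertex set $\Vlight$, each $f(v)$ is a precomputable weight attached to $v$, so 3D simplex range sum has the same $O^*(n^{2/3})$ complexity as range counting. For the edge set $\Elight$, suppose $\ell$ is spanned by two planes with fixed primal points $p_1,p_2$, and $h$ is a query plane with primal point $p_3(h)$; then the new vertex $v=\ell\cap h$ carries
\[
f(v) \;=\; \tfrac{1}{6}\bigl(p_1\times p_2\bigr)\cdot p_3(h) \;=\; M_\ell^\top\, p_3(h),
\]
where $M_\ell\in\R^3$ depends only on $\ell$. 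Each base-level structure ${\cal S}_L$ therefore just stores the single 3-vector $\sum_{\ell\in L} M_\ell$, answering a query in $O(1)$ time by one dot product and supporting $O(1)$-time insertions and deletions. This realizes $\alpha=O(1)$, well within the $3/4$ budget, so multi-level queries on $\Elight$ run in $O^*(n^{3/4})$ time as in Theorem~\ref{thm:ch3dsize}.

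Finally, for the residual contributions from $\LE(\Hrem)$ I proceed exactly as before: examine each of its $O(s+k)$ vertices (adding $f(v)$ if vertical ray shooting places it below $\LE(\Hlight)$) and each of its $O(s+k)$ edges (ray-shooting along the edge to locate any crossing with $\LE(\Hlight)$ and adding $f$ at the new vertex). Summing these contributions with the bulk range-sum answers, and doing the same for $\UH(H)$, gives the volume. The time analysis is identical to Theorem~\ref{thm:ch3dsize}, so the same parameter choice $s=k^2$, $k=n^{1/12}$ yields the $O^*(n^{11/12})$ amortized bound. The main obstacle I anticipate is entirely algebraic: one must recognize the trilinearity of the signed volume in its three primal vertices, which is what collapses each base-level structure to maintaining a single 3-vector. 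Once this observation is in hand, volume maintenance plugs mechanically into the shallow-cutting framework.
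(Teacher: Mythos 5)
Your proof is correct and follows essentially the same approach as the paper: both reduce the problem to maintaining the sum of per-vertex weights $f(v)$ that equal (signed or unsigned) tetrahedron volumes from a fixed apex, and both exploit the observation that for a fixed dual line $\ell$ the weight $f(\ell\cap h_p)$ is linear in $p$, so each base-level canonical subset $L$ only needs to store a constant number of summed coefficients, giving $\alpha=0$. The only cosmetic difference is that the paper takes the apex $o$ to be a fixed point far below the input (computing unsigned upper-hull tetrahedra minus unsigned lower-hull tetrahedra), whereas you take the apex at the origin and use signed contributions via the divergence theorem; both formulations face the same minor orientation issue for the ordered pair $(p_1,p_2)$, which the paper (like you) defers to the base level of the multi-level structure.
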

\begin{proof}
Let $o$ be a fixed point sufficiently far below all the input points.
It suffices to maintain the sum of the volume of the tetrahedra
$op_1p_2p_3$ over all
upper hull facets $p_1p_2p_3$, since by symmetry we can maintain a similar sum for lower hull facets and subtract.
We map each point $p$ to its dual plane $h_p$.
Then the problem fits in the above framework, with
$f(v)$ equal to the volume of the tetrahedron $op_1p_2p_3$ for 
a vertex $v$ defined by the planes $h_{p_1},h_{p_2},h_{p_3}$.
For a fixed line $\ell$ defined by the planes $h_{p_1}$ and $h_{p_2}$, observe that $f(\ell\cap h_p)$ is a linear function over the 3 coordinates of $p$, since the volume of $op_1p_2p$ can be expressed as a determinant.  (This assumes that $op_1p_2$ is oriented clockwise, which we can ensure at the base
level of the multi-level data structure.) 
Thus, we can implement the base structures ${\cal S}_L$ with $\alpha=0$, by simply summing the 4 coefficients of the associated linear functions over all $\ell\in L$.  
\end{proof}

\begin{theorem}
We can maintain the largest empty circle of a dynamic set of $n$ points in $\R^2$, under the restriction that the center lies inside a given triangle $\D_0$, with $O^*(n)$ preprocessing time
and $O^*(n^{11/12})$ amortized insertion and deletion time.
\end{theorem}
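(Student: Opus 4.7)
The plan is to lift each $p=(p_x,p_y)\in P$ to the plane $h_p:z=2p_xx+2p_yy-(p_x^2+p_y^2)$ and then reflect $z\mapsto -z$, so the resulting dynamic set $H$ of $n$ planes in $\R^3$ has a \emph{lower} envelope; a direct calculation then gives $\|q-p\|^2 = x^2+y^2+z'_p(q)$, hence the squared largest-empty-circle radius at a legal center $q=(x,y)\in\D_0$ equals $x^2+y^2+\LE(H)(q)$. Setting $f(v):=x_v^2+y_v^2+z_v$, the task becomes: maintain $\max f$ over $\LE(H)\cap(\D_0\times\R)$. On each face of $\LE(H)$ the function $f$ is strictly convex in $(x,y)$, so the maximum is attained at one of three kinds of candidates: (a) a vertex of $\LE(H)$ whose $xy$-projection lies in $\D_0$; (b) a point where a bounded edge of $\LE(H)$ crosses one of the three vertical walls of $\partial\D_0\times\R$; or (c) a point on the vertical line through a vertex of $\D_0$.

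The bulk of the work is case (a), handled by the max-$f$ variant of Theorem~\ref{thm:ch3dsize} described right after its proof, modified only so that $\Vlight$ is filtered through the prism $\D_0\times\R$ and each triangle of $\LE(\Hrem)$ is first clipped to that prism before the range-max queries. The new ingredient is the base structure ${\cal S}_L$ needed for $\Elight$: given a query plane $h$, return $\max_{\ell\in L} f(\ell\cap h)$ over a canonical set $L$ of lines in $\R^3$. Parametrizing each $\ell\in L$ by a scalar $t$, the coordinates of $\ell\cap h$ are rational functions in the coefficients of $h$ with denominator linear in $h$, while $f$ is a degree-$2$ polynomial in those coordinates, so $h\mapsto f(\ell\cap h)$ is a fixed-degree algebraic function in $O(1)$ variables. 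Returning the maximum of $|L|$ such functions at a query point $h$ reduces via standard linearization to semialgebraic range-max searching in constant dimension \cite{AgaMat94,AgaSURV}, yielding $O^*(|L|^\alpha)$ query time for some $\alpha<3/4$ together with $O^*(1)$ updates -- exactly the hypothesis that makes the $O^*(n^{11/12})$ bound of Theorem~\ref{thm:ch3dsize} carry over.

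Cases (b) and (c) are auxiliary. For (b), on each of the three walls the planes of $H$ cut out $n$ lines, and the $f$-maxima along the wall occur only at vertices of their 2D lower envelope; this is the two-dimensional analogue of the Theorem~\ref{thm:ch3dsize} template (shallow cuttings in $\R^2$, bad/good split, dynamic 2D range-max on the good vertices) and easily achieves $o(n^{11/12})$ update time. For (c) we simply perform, at each of the three corners, a vertical ray-shooting query into $\LE(\Hlight)$ plus a scan of $\LE(\Hrem)$. Taking the maximum across all three cases and choosing $s=k^2$, $k=n^{1/12}$ as in Theorem~\ref{thm:ch3dsize} yields the claimed $O^*(n^{11/12})$ amortized update time.

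The main obstacle is implementing ${\cal S}_L$: I need the base structure to simultaneously deliver sub-$n^{3/4}$ query time for the algebraic max query \emph{and} support $O^*(1)$ insertions/deletions of lines. The static semialgebraic machinery of \cite{AgaMat94,AgaSURV} already supplies the query-time half, so the remaining technical step is to dynamize it (logarithmic method combined with partial rebuilding for deletions) without degrading either exponent.
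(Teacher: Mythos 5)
Your overall plan---lift to 3D, recognize the answer as a maximum of $f(x,y,z)=x^2+y^2+z$ over the lower envelope restricted to $\D_0\times\R$, then invoke the max-$f$ generalization of Theorem~\ref{thm:ch3dsize}---is the paper's approach. Your case split (a)/(b)/(c) is a legitimate alternative to what the paper does, which is simply to add three near-vertical planes along $\partial\D_0$ to $H$ and then run the vertex-aggregate framework on the augmented set; that trick absorbs your cases (b) and (c) as ordinary vertices of the augmented lower envelope, so the two formulations are equivalent (you pay a small price in having to adapt the 2D wall subproblem and clip queries to the prism, which the paper avoids).

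The real gap is in the base structure ${\cal S}_L$. You treat $h$ as a generic plane with ``$O(1)$'' coefficients, parametrize $\ell$ by a scalar, and then assert that ``standard linearization'' gives some $\alpha<3/4$. This is both vaguer and weaker than what the paper does, and the inequality $\alpha<3/4$ is not actually established. The point the paper exploits is that every query plane arising in the computation is either one of three fixed near-vertical planes (trivially handled) or a \emph{lifted} plane $h_{(a,b)}$; hence $h$ ranges over a \emph{two}-parameter family, not three. With this, for a fixed line $\ell$, $(a,b)\mapsto f(\ell\cap h_{(a,b)})$ is a fixed-degree rational function of two variables, so the ${\cal S}_L$ query becomes vertical point-location / max among $|L|$ constant-degree surfaces over the $(a,b)$-plane in $\R^3$, which semi-algebraic range searching in 3D handles with $\alpha=2/3$. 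Without restricting to the two-parameter query family, the honest bound you can cite is $\alpha=3/4$ (lines are four-dimensional objects), which still meets the $\alpha\le 3/4$ threshold stated after Theorem~\ref{thm:ch3dsize} but does not justify your ``$\alpha<3/4$.'' You should replace the vague linearization claim with the paper's explicit two-variable observation; that is the crux of this theorem. Your final paragraph flags dynamizing ${\cal S}_L$ as an obstacle, but this is standard for partition-tree-based structures (same as the paper's unremarked handling of updates to $\Vlight$ and $\Elight$) and is not a substantive gap.
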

\begin{proof}
By the standard lifting transformation,
map each input point $p=(a,b)\in\R^2$ to the plane $h_p$ with equation $z=-2ax-2by+a^2+b^2$ in $\R^3$.  Add 3 near-vertical planes along the edges of $\D_0$.
The largest empty circle problem reduces to finding
a vertex $v=(x,y,z)$ of the lower envelope of these planes, 
maximizing $f(v)=x^2+y^2+z$.
For a fixed line $\ell$, observe that $f(\ell\cap h_{(a,b)})$ is a fixed-degree rational function (ratio of two polynomials) in the 2 variables $a$ and $b$.  
We can implement the base structures ${\cal S}_L$ with $\alpha=2/3$, by
known techniques for semialgebraic range searching in 3D~\cite{AgaMat95} (applied to the graphs of
these bivariate functions).
\end{proof}

We can obtain sublinear update time bounds for
other similar problems, e.g., maintaining the minimum/maximum-area Delaunay triangle of a dynamic 2D point set.  Another application is computing the number of maximal points, also called ``skyline points'' (which are points not dominated by other points), in a dynamic 3D point set:

\begin{theorem}
We can maintain the number of maximal points in a dynamic set $P$ of $n$ points in $\R^3$, with $O^*(n)$ preprocessing time
and $O^*(n^{2/3})$ amortized insertion and deletion time.
\end{theorem}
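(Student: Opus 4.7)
The plan is to adapt the shallow-cutting framework of Theorem~\ref{thm:ch3dsize}, but replace the lower envelope of planes by the ``skyline'' staircase $\partial\bigl(\bigcup_{p\in P} O_p^-\bigr)$, where $O_p^-=\{(x,y,z)\in\R^3:\ x\le p_1,\,y\le p_2,\,z\le p_3\}$ is the downward dominance orthant at $p=(p_1,p_2,p_3)$. A point $p\in P$ is maximal exactly when it supplies a distinguished (``convex'') vertex of this orthogonal surface, so the count of distinguished vertices yields the answer. This surface is a lower envelope of piecewise-axis-aligned surfaces of linear complexity, and it admits Matou\v sek-style shallow cuttings in the spirit of Lemma~\ref{lem:shacut}: a $(k,O(k))$-shallow cutting has $O(n/k)$ cells each crossed by $O(k)$ orthants and can be built in $O^*(n)$ time (either by a direct analysis of the $(\le k)$-level of $n$ orthants in $\R^3$, or by lifting orthants to planes in a higher-dimensional parameter space with a multi-level data structure).

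Having this in hand, I would divide the update sequence into phases of $k$ updates. At the start of each phase I build a shallow cutting $\Gamma$ on the orthants of the current point set, classify the points into a deletion-only ``light'' set $P_0$ (orthants intersecting at most $n/s$ cells of $\Gamma$) and a ``bad'' set $P_{\mathrm{bad}}$ of size $O(s)$, and append all intra-phase insertions to $P_{\mathrm{bad}}$ as they occur, keeping $|P_{\mathrm{bad}}|=O(s+k)$. I maintain the set $M_0$ of maxima of $P_0$, of total size $O(n)$ with $O(k)$ per cell of $\Gamma$, in a dynamic 3D orthogonal range-counting structure (e.g., a range tree) supporting $O^*(1)$ update and query time. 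Deleting a light point triggers a rebuild of the local $O(k)$ contribution to $M_0$ in each of the at most $n/s$ cells of $\Gamma$ that its orthant crosses, at amortized cost $O^*(kn/s)$; deletions from $P_{\mathrm{bad}}$ and insertions are trivial.

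For a query I would decompose
\[
|\mathrm{maxima}(P)| \;=\; \Bigl|M_0\setminus\textstyle\bigcup_{q\in P_{\mathrm{bad}}} O_q^+\Bigr| \;+\; \bigl|\{p\in P_{\mathrm{bad}} : \text{no other point of } P\text{ dominates }p\}\bigr|,
\]
where $O_q^+=\{(x,y,z):\ x>q_1,\,y>q_2,\,z>q_3\}$. A plane sweep on the $O(s+k)$ orthants of $P_{\mathrm{bad}}$ decomposes the complement of their union into $O(s+k)$ axis-aligned boxes, on each of which I run an orthogonal range count against $M_0$, costing $O^*(s+k)$ in total; the second term is evaluated by $O(s+k)$ dominance-emptiness probes into $P_0\cup P_{\mathrm{bad}}$, again $O^*(s+k)$. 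Summing all costs gives an amortized update time of
\[
O^*(n/k + kn/s + s + k),
\]
which with $k=n^{1/3}$ and $s=n^{2/3}$ becomes $O^*(n^{2/3})$. The hard part is the first paragraph: justifying a shallow cutting with the stated parameters for dominance orthants rather than planes, since Lemma~\ref{lem:shacut} is stated only for planes. This requires either modifying the cutting construction to handle axis-aligned ``three-quadrant'' surfaces directly or reducing the problem to a lower envelope of planes in a higher-dimensional space where Lemma~\ref{lem:shacut} applies off the shelf.
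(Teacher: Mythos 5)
Your proposal matches the paper's approach almost line for line: the paper also observes that the maximal points are exactly the vertices of the upper envelope of dominance orthants, invokes the well-known shallow-cutting analogue for 3D orthants (citing~\cite{CLP}, and noting that there is a transformation mapping such orthants to halfspaces \emph{in} $\R^3$, not in a higher-dimensional space), drops the edge-intersection structure $\Elight$ since orthogonal range counting suffices with $O^*(1)$ query/update time, and balances $O^*(n/k + kn/s + s + k)$ with the same choice $k=n^{1/3}$, $s=n^{2/3}$. The one slip is the sign in your decomposition: to filter $M_0$ you should remove points \emph{dominated by} some $q\in P_{\mathrm{bad}}$, i.e.\ points in the open downward orthant $\{x<q_1,\,y<q_2,\,z<q_3\}$, not in $O_q^+=\{x>q_1,\,y>q_2,\,z>q_3\}$ as written; this is a notational typo and does not affect the complexity or the argument.
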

\begin{proof}
The maximal points are vertices of the upper envelope of orthants
$(-\infty,a]\times (-\infty,b]\times (-\infty,c]$ over all input points $(a,b,c)\in P$ (the upper envelope is an orthogonal polyhedron).  As is well known, an analogue of the shallow cutting lemma holds
for such orthants in 3D (in fact, there is a transformation that maps such orthants to halfspaces in 3D); for example, see~\cite{CLP}.  The same method
can thus be adapted.  In fact, it can be simplified.
The data structure for $\Vlight$ is for orthogonal range searching~\cite{BerBOOK}, which has $O^*(1)$ query and update time.
The data structure $\Elight$ is not needed.
The overall update time becomes 
\[ O^*(n/k + kn/s + (s+k)).
\]
The theorem follows by setting $s=k^2$ and $k=n^{1/3}$.
\end{proof}

We can similarly maintain the volume of a union of $n$ boxes in $\R^3$
in the case when all the boxes have a common corner point at the origin (this is called
the \emph{hypervolume indicator} problem) with $O^*(n^{2/3})$ update time
(previously, an $O^*(\sqrt{n})$ bound was known only in the semi-online setting~\cite{SODA02}).

\section{Dynamic 2D Hausdorff Distance}\label{sec:sublin2}

The method in Section~\ref{sec:sublin} can also be adapted to
solve the dynamic 2D Hausdorff distance problem:

\begin{theorem}
We can maintain the Hausdorff distance between
two dynamic sets $P$ and $Q$ of at most $n$ points in $\R^2$,
with $O^*(n)$ preprocessing time 
and $O^*(n^{8/9})$ amortized insertion and deletion time.
\end{theorem}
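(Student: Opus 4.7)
My plan is to reduce the Hausdorff problem to maximizing a lower-envelope function via duality, and then adapt the shallow-cutting framework of Theorem~\ref{thm:ch3dsize} to the two-set, max-min setting.

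By the standard lifting, I would map each $p=(a,b)\in P$ to the plane $h_p:z=-2ax-2by+a^2+b^2$, forming a set $H_P$ of $|P|$ planes in $\R^3$. The identity $\min_{p\in P}\|p-q\|^2=|q|^2+\LE(H_P)(q_x,q_y)$ turns the squared Hausdorff distance into $\max_{q\in Q}[\,|q|^2+\LE(H_P)(q)\,]$; on each face of the planar subdivision induced by $\LE(H_P)$ the objective is a fixed constant-degree bivariate polynomial determined by the single lowest plane on that face.

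Mirroring Theorem~\ref{thm:ch3dsize}, I would partition the updates into phases of $k$ and, at the start of each phase, compute a $(k,O(k))$-shallow cutting $\Gamma$ of $H_P$ with $O(n/k)$ cells, split $H_P=\Hlight\cup\Hrem$ so that each plane of $\Hlight$ hits at most $n/s$ cells and $|\Hrem|=O(s)$, and record the $O(n)$ vertices $\Vlight$, edges $\Elight$, and faces of $\LE(\Hlight)$ covered by $\Gamma$. Insertions to $H_P$ append to $\Hrem$, and a deletion of $h\in\Hlight$ rebuilds the local envelope in the at most $n/s$ intersected cells at amortized cost $O^*(kn/s)$, exactly as in Section~\ref{sec:sublin}. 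In parallel I would maintain $Q$ in a dynamic semialgebraic range-max structure $\mathcal{D}_Q$ that, given a triangle $\tau\subset\R^2$ and a plane $h$, returns $\max_{q\in Q\cap\tau}[\,|q|^2+h(q)\,]$ in $O^*(n^{2/3})$ time with $O^*(1)$ insertions and deletions, obtained from a multi-level partition tree on the coefficients of the quadratic $q\mapsto|q|^2+h(q)$.

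To compute the answer I would build $\LE(\Hrem)$ in $O((s+k)\log(s+k))$ time, triangulate it into $O(s+k)$ triangles, and for each triangle $\tau$ with lowest $\Hrem$-plane $h'$ recover the contribution in two parts: (a) the portion of $\tau$ on which $h'(q)\le\LE(\Hlight)(q)$, where $h'$ realizes $\LE(H_P)$ and the contribution is obtained from a single $\mathcal{D}_Q$-query augmented with a vertical ray-shooting filter against the cell structure of $\Gamma$; and (b) the complementary portion, on which some face of $\LE(\Hlight)$ realizes $\LE(H_P)$ and whose contribution is aggregated via $\mathcal{D}_Q$-queries attached to $\Vlight$ and $\Elight$ in the same multi-level style used in Section~\ref{sec:sublin} for range counting on $\Vlight$ and triangle-segment intersections on $\Elight$, the only change being that the leaf aggregator now returns a quadratic max rather than a count. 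The per-answer cost totals $O^*((s+k)n^{2/3})$.

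The main obstacle is precisely the coupling between $\LE(\Hlight)$ and $\LE(\Hrem)$: at a given $q\in Q$ the lowest plane may come from either set, and the overlay of the two subdivisions has worst-case complexity $\Omega(ns)$, so the query procedure must never materialize that overlay. Grafting a single vertical-ray-shooting filter against whichever envelope is available onto $\mathcal{D}_Q$ (at $O^*(1)$ per candidate point) is what keeps every query inside the $O^*(n^{2/3})$ semialgebraic range-searching budget while still returning the correct constrained maximum. Combining the three terms yields amortized update time $O^*(n/k+kn/s+(s+k)n^{2/3})$; choosing $s=k^2$ and $k=n^{1/9}$ (so $s=n^{2/9}$) balances all three to $O^*(n^{8/9})$, establishing the theorem.
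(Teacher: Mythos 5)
Your high-level framework (phases of length $k$, the split $H = \Hlight \cup \Hrem$, the overlay of $\LE(\Hrem)$, the final $O^*(n/k + kn/s + (s+k)n^{2/3})$ balance with $s=k^2$, $k=n^{1/9}$) matches the paper, but there is a genuine gap in how you organize the range-searching step, and it is exactly the place where the paper's proof departs from the convex-hull-size argument of Theorem~\ref{thm:ch3dsize}.

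The paper does \emph{not} build $\Vlight$, $\Elight$, or any multi-level segment-intersection structure for Hausdorff distance. Instead it precomputes the 3D point set $\Lambda_0=\{\lambda_{\Hlight}(q) : q\in Q\}$, the projections of the $Q$ points onto $\LE(\Hlight)$ (restricted to the cutting $\Gamma$), and runs simplex range searching on $\Lambda_0$. This is the missing idea: once $q$ is replaced by its lift $\lambda_{\Hlight}(q)=(x,y,\LE(\Hlight)(x,y))$, the dichotomy ``the minimizing plane at $q$ comes from $\Hlight$'' versus ``it comes from $\Hrem$'' becomes simply ``$\lambda_{\Hlight}(q)$ lies below $\tau$'' versus ``above $\tau$'', i.e.\ a tetrahedral range condition on $\Lambda_0$. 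Your proposed $\mathcal{D}_Q$ is a range structure on the 2D points of $Q$, so it has no access to the value $\LE(\Hlight)(q)$ inside the partition tree; the constraint $h'(q)\le \LE(\Hlight)(q)$ is bounded by the $\Theta(n)$-complexity envelope $\LE(\Hlight)$ and is not a constant-complexity semialgebraic set, so it cannot be imposed at the canonical-subset level. Your ``vertical ray-shooting filter'' would then have to be applied per candidate point, which defeats the $O^*(n^{2/3})$ budget. Likewise, part (b) of your query is underspecified: when $\LE(\Hlight)$ realizes the envelope over the projection of $\tau$, the active plane varies across $\Theta(n)$ faces, and a max-min objective cannot be aggregated through the $\Vlight$/$\Elight$ machinery (which the paper uses for Theorem~\ref{thm:ch3dsize} only because there the objective decomposes additively per vertex/edge of $\LE(\Hlight)$, not per point of $Q$). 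In the paper, this case is instead handled by asking for $\max f(v)$ over $v\in\Lambda_0$ below $\tau$ --- a plain simplex range-max, because $f(\lambda_{\Hlight}(q))$ is already baked into the stored point.

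A second, quantitative symptom of the same gap: if you actually inherit the $\Elight$ level from Theorem~\ref{thm:ch3dsize}, its query cost is $O^*(n^{3/4})$, not $O^*(n^{2/3})$, and then $(s+k)n^{3/4}$ with $s=k^2=n^{2/9}$ is $n^{35/36}$, which does not balance to $n^{8/9}$. The paper reaches $n^{8/9}$ precisely because it discards $\Elight$ entirely and needs only $O^*(n^{2/3})$-time simplex range-max on $\Lambda_0$ (plus a 2-level variant for the ``above $\tau$'' case, where the answering plane is the fixed plane through $\tau$). Two further items you would need: subdivide the shallow-cutting cells by vertical plane cuts so each cell's $xy$-projection contains $O(k)$ points of $Q$ (otherwise recomputing $\lambda_{\Hlight}$ in a cell after a deletion from $\Hlight$ is not $O(k\log k)$), and maintain a $Q_0/\Qrem$ split so that deletions from $Q$ simply delete a point from $\Lambda_0$, and $\Qrem$ points are handled by direct ray shooting at query time.
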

\begin{proof}
By the standard lifting transformation,
map each point $p=(a,b)\in P$ to the plane $h_p$ with equation $z=-2ax-2by+a^2+b^2$
in $\R^3$.  Let $H$ be the resulting set of planes.
For each point $q\in Q$, let $\lambda_H(q)$ denote the point on $\LE(H)$ at the vertical line at $q$.
The problem is to find the maximum of $f(\lambda_H(q))$ over all $q\in Q$, where $f(x,y,z)=x^2+y^2+z$, for a dynamic set $H$ of at most $n$ planes and a dynamic set $Q$ of at most $n$ points.

Let $k$ and $s$ be parameters to be set later.
We divide the update sequence into phases of $k$ updates each.
We maintain a decomposition of the set $H$ into a deletion-only set
$\Hlight$ and a small set $\Hrem$ of ``bad'' planes, and 
a decomposition of the set $Q$ into a deletion-only set
$\Qlight$ and a small set $\Qrem$ of ``bad'' points.

\medskip\noindent{\bf Preprocessing for each phase.}
At the beginning of each phase,
we construct a $(k,O(k))$-shallow cutting $\Gamma$ of $H$
with $O(n/k)$ cells,
together with all their conflict lists, by Lemma~\ref{lem:shacut}.  We further subdivide the cells
to ensure that each cell contains at most $k$ points of $Q$ in its $xy$-projection; this can be done by
$O(n/k)$ additional vertical plane cuts, so the number of cells remains
$O(n/k)$.
We set 
\[ \Hlight=\{h\in H: \mbox{$h$ intersects at most $n/s$ cells}\}
\mbox{\ \ and\ \ } \Hrem=H-\Hlight.
\]
Since the total conflict list size is $O(n/k\cdot k)=O(n)$, we have 
$|\Hrem|=O(s)$.  

\newcommand{\lambdalight}{\lambda_{\Hlight}}
We set $Q_0=Q$.  We compute $\lambdalight(q)$ for all $q\in Q$
in $O(n\log n)$ time.   Let $\Lambda_0$ be the 
subset of points in $\{\lambdalight(q) : q\in Q\}$ covered by $\Gamma$.
We preprocess 
the point set $\Lambda_0$ in known 3D simplex range searching data structures  \cite{Mat92,SoCG10,AgaSURV} in $O^*(n)$ time, to support
the following queries in $O^*(n^{2/3})$ time:
\LIPICS{\medskip}
\begin{itemize}
\item compute the maximum of $f(v)$ over 
all points $v\in \Lambda_0$ inside a query tetrahedron;
\item compute the maximum of $f(\lambda_{\{h_p\}}(x,y))$ 
over all points $v=(x,y,z)\in \Lambda_0$ inside a query tetrahedron
for
a query plane $h_p$; note that
maximizing $f(\lambda_{\{h_p\}}(x,y))$ is equivalent to
 maximizing the distance from $(x,y)$ to
$p$ 
(so we can use a 2-level
data structure, combining simplex range searching with 2D farthest neighbor searching).
\end{itemize}
\LIPICS{\medskip}
The data structures can support insertions and deletions of points in $\Lambda_0$ in $O^*(1)$ time each.  In addition, we preprocess $H_0$
in a known dynamic lower envelope data structure in $O^*(n)$ time, to support ray shooting queries in $\LE(H_0)$ in $O^*(1)$ time and deletions in $O^*(1)$ time  (e.g., see \cite{AgaMat95} or Section~\ref{sec:dch3d}).

\medskip\noindent{\bf Inserting a plane $h$ to $H$ or a point $q$ to $Q$.}
We just insert $h$ to the list $\Hrem$ or $q$ to the list $\Qrem$.  Note that $|\Hrem|=O(s+k)$ and $|\Qrem|=O(k)$ at all times.

\medskip\noindent{\bf Deleting a plane $h$ from $\Hrem$ or a point $q$ from $\Qrem$.}
We just remove $h$ from the list $\Hrem$ or $q$ from the list $\Qrem$.

\medskip\noindent{\bf Deleting a point $q$ from $\Qlight$.}
We just remove $\lambdalight(q)$ from the set $\Lambda_0$ in $O^*(1)$ time.

\medskip\noindent{\bf Deleting a plane $h$ from $\Hlight$.}
We consider
each cell $\D\in\Gamma$ intersected by $h$, and compute
$\lambda_{(H_0)_\D}(q)$ for all $q\in Q$ in the $xy$-projection of $\D$ from scratch 
in $O(k\log k)$ time (since $\D$ is intersected by $O(k)$ planes in $H$
and contains $O(k)$ points of $Q$ in its $xy$-projection).
As the number of cells intersected by $h$ is at most $n/s$,
this computation takes $O^*(kn/s)$ total time.
The set $\Lambda_0$ undergoes at most $O(kn/s)$ changes,
and its associated data structures can be updated in $O^*(kn/s)$ time.

\medskip\noindent{\bf Computing the answer.}
To compute the maximum of $f(\lambda_H(q))$ over all $q\in Q$,
we first construct $\LE(\Hrem)$ in $O((s+k)\log(s+k))$ time, and triangulate all its $O(s+k)$ faces.  
For each triangle $\tau$ in this triangulation:
\LIPICS{\medskip}
\begin{itemize}
\item 
We compute the maximum of $f(v)$ over all $v=(x,y,z)\in\Lambda_0$ that lie directly
below $\tau$, in $O^*(n^{2/3})$ time. Note that for all such $v$,
the $\lambda_H(x,y)=\lambda_{\Hlight}(x,y)=v$.
\item We let $h$ be the plane through $\tau$ and compute the maximum of $f(\lambda_{\{h\}}(x,y))$ over all
$v=(x,y,z)\in\Lambda_0$ that lie directly above $\tau$, in $O^*(n^{2/3})$ time.
Note that for all such $v$, $\lambda_H(x,y)=\lambda_{\{h\}}(x,y)$.
\end{itemize}
\LIPICS{\medskip}
In addition, for each $q\in \Qrem$, we compute $\lambda_H(q)$ by vertical
ray shooting in $\LE(\Hlight)$ and $\LE(\Hrem)$ in $O^*(1)$ time; we take the maximum of $f(\lambda_H(q))$
for these points.
Note that $\LE(H)$ is covered by $\Gamma$ at all times, since there are
at most $k$ deletions per phase.
The overall maximum thus gives the answer.
The total time to compute the answer is $O^*((s+k)n^{2/3})$.

\medskip\noindent{\bf Analysis.}
The overall amortized update time is
\[ O^*(n/k + kn/s + (s+k)n^{2/3}).
\]
The theorem follows by setting $s=k^2$ and $k=n^{1/9}$.
\end{proof}

We can similarly solve the dynamic 2D discrete 1-center problem,
by switching lower with upper envelopes and maximum with minimum:

\begin{theorem}
We can maintain the discrete 1-center of
a dynamic set of $n$ points in $\R^2$, with $O^*(n)$ preprocessing time and $O^*(n^{8/9})$ amortized insertion and deletion time.
\end{theorem}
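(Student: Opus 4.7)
The plan is to mirror the dynamic Hausdorff construction, replacing $\LE$ by $\UH$ and $\max$ by $\min$. Using the standard lifting $p=(a,b)\mapsto h_p:\ z=-2ax-2by+a^2+b^2$, let $H$ be the set of lifted planes of the current point set $P$. Writing $\mu_H(q)$ for the point of $\UH(H)$ on the vertical line through $q=(a',b')$, and setting $f(x,y,z)=x^2+y^2+z$, we have $f(\mu_H(q))=\max_{p\in P}\|p-q\|^2$. The discrete 1-center therefore equals $\min_{q\in P} f(\mu_H(q))$, and we maintain a single dynamic plane set $H$ and point set $Q=P$, treating each update as acting on both simultaneously.

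Choose parameters $k$ and $s$ as before and divide updates into phases of length $k$. At the start of each phase, apply Lemma~\ref{lem:shacut} to the reflected planes to obtain an upper $(k,O(k))$-shallow cutting $\Gamma$ for $\UH(H)$ with $O(n/k)$ cells, then subdivide vertically so that each cell's $xy$-projection holds at most $k$ points of $Q$. Put $\Hlight=\{h\in H:h\text{ crosses }\le n/s\text{ cells}\}$ and $\Hrem=H\setminus\Hlight$; the total conflict-list bound gives $|\Hrem|=O(s)$. Set $\Qlight=Q$ (deletion-only) and $\Qrem=\emptyset$. Compute $\mu_{\Hlight}(q)$ for every $q\in\Qlight$, let $\Lambda_0$ be the subset of these points that are covered by $\Gamma$, and preprocess $\Lambda_0$ for two kinds of queries in $O^*(n^{2/3})$ time: (a)~the minimum of $f(v)$ over $v\in\Lambda_0$ inside a query tetrahedron; and (b)~for a query plane $h_p$, the minimum of $f(\mu_{\{h_p\}}(x,y))=\|(x,y)-p\|^2$ over $v=(x,y,z)\in\Lambda_0$ inside a query tetrahedron. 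Query (b) is implemented by a two-level structure combining 3D simplex range searching with 2D nearest-neighbor search via Voronoi point location. As in the Hausdorff proof, also preprocess $\Hlight$ in a dynamic upper-envelope structure supporting $O^*(1)$ ray shooting and deletions.

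Updates follow the Hausdorff template verbatim. Insertions of a plane or a point are diverted to $\Hrem$ or $\Qrem$; deleting $q\in\Qlight$ removes $\mu_{\Hlight}(q)$ from $\Lambda_0$ in $O^*(1)$; deleting $h\in\Hlight$ rebuilds $\UH((\Hlight)_\D)$ inside each of the $\le n/s$ cells $\D$ hit by $h$ in $O(k\log k)$ time, triggering $O(kn/s)$ updates to $\Lambda_0$. To compute the current answer, construct $\UH(\Hrem)$ in $O((s+k)\log(s+k))$ time and triangulate. For each triangle $\tau$ lying on plane $h$, apply query (a) to the tetrahedron above $\tau$ (where any $v\in\Lambda_0$ satisfies $\mu_H(x,y)=v$), and apply query (b) with that plane $h$ to the tetrahedron below $\tau$ (where $\mu_H(x,y)=\mu_{\{h\}}(x,y)$). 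Finally, for each $q\in\Qrem$, compute $\mu_H(q)$ by vertical ray shootings in $\UH(\Hlight)$ and $\UH(\Hrem)$ and include $f(\mu_H(q))$. Since $\UH(H)$ is covered by $\Gamma$ throughout the phase, these cases exhaust all $q\in Q$, and the overall minimum is the answer.

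The amortized cost is $O^*(n/k+kn/s+(s+k)n^{2/3})$, balanced by $s=k^2$ and $k=n^{1/9}$, yielding $O^*(n^{8/9})$. The only substantive adaptation relative to Hausdorff is the base-level structure for query (b): the Hausdorff proof used 2D farthest-point Voronoi diagrams, whereas here we need ordinary 2D nearest-neighbor search inside each canonical subset of the simplex range tree, again via Voronoi point location. This is a clean substitution in the two-level composition and changes nothing else in the analysis; I expect the main care needed is in verifying that the deletion-only semidynamic maintenance of the nearest-neighbor substructures still supports the required $O^*(1)$ update time inherited from the Hausdorff setting.
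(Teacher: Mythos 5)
Your proposal is correct and follows exactly the route the paper intends: the paper's own proof is the one-line remark that one obtains the discrete 1-center result from the Hausdorff theorem ``by switching lower with upper envelopes and maximum with minimum,'' and your write-up is precisely a careful unpacking of that substitution, including the observation that the base level of the two-level structure for query~(b) changes from farthest-point to nearest-point Voronoi search and that $Q=P$ forces each update to be processed on both the plane side and the point side. No gaps.
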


It is possible to slightly improve the $O^*(n^{8/9})$ bound to $O^*(n^{5/6})$ in the preceding two theorems: the key observation is that the point set $\Lambda_0$
is in convex position, and in the convex-position case,
the $O^*(n^{2/3})$ query time for 3D
simplex range searching can be improved to $O^*(\sqrt{n})$, as shown by Sharir and Zaban~\cite{ShaZab}.
(The same observation also improves the author's previous
$O^*(n^{5/6})$ result to $O^*(n^{3/4})$ in the semi-online setting.)

It remains open whether the dynamic Hausdorff distance and discrete 1-center problem in dimensions $d\ge 3$ can similarly be solved in sublinear time.  The author's previous paper~\cite{SODA02} gave an $O^*(n^{1-1/(d+1)(\lceil d/2\rceil + 1)})$-time algorithm but only in the semi-online setting.  In higher dimensions, the size of shallow cuttings becomes too large for the approach to be effective.

\newcommand{\Gin}{\Gamma_{\mbox{\scriptsize\rm in}}}
\newcommand{\Gout}{\Gamma_{\mbox{\scriptsize\rm out}}}

\newcommand{\Vin}{V_{\mbox{\scriptsize\rm in}}}
\newcommand{\Vout}{V_{\mbox{\scriptsize\rm out}}}

\section{Dynamic 3D Convex Hull Queries}\label{sec:dch3d}

In this section, we present a slightly improved data structure for extreme point queries for a dynamic 3D convex hull, by combining the author's previous data structure~\cite{SODA06} (as refined by Kaplan et al.~\cite{KapSODA17}) with a modification of Chan and Tsakalidis's algorithm for constructing a hierarchy of shallow cuttings~\cite{ChaTsa}.  

To describe the latter, we need a definition:
Given a set $H$ of $n$ planes in $\R^3$ and a collection $\Gin$ of cells,
a \emph{$\Gin$-restricted $(k,K)$-shallow cutting} is a collection $\Gout$
of cells covering $\{p\in \R^3: \mbox{$p$ is covered by $\Gin$ and has level at most $k$}\}$, such that each cell in $\Gout$ intersects
at most $K$ planes.  We note that Chan and Tsakalidis's algorithm, with some  technical modifications,
can prove the following lemma.  (The proof requires knowledge of
Chan and Tsakalidis's paper, and is deferred to 
\PAPER{the appendix}%
\LIPICS{the full paper}.)

\begin{lemma}\label{lem:refine:cut}
There exist constants $b$, $c$, and $c'$ such that the following is true:
For a set $H$ of at most $n$ planes in $\R^3$ and a parameter $k\in [1,n]$, given a $(-\infty,cbk)$-shallow cutting\footnote{
In a $(-\infty,k)$-shallow cutting, the cells are not required to cover any particular region.
} 
$\Gin$ with at most $c'n/(bk)$ downward cells, together with their conflict lists,
we can construct a $\Gin$-restricted $(k,ck)$-shallow cutting $\Gout$
with at most $c'n/k$ downward cells, together with their conflict lists,
in $O(n + (n/k)\log (n/k))$ deterministic time.
\end{lemma}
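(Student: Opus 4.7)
The plan is to mimic one level of Chan and Tsakalidis's hierarchical construction, exploiting the fact that $\Gin$, together with its conflict lists, already localizes the problem. Since $\Gout$ need only cover the portion of the $(\le k)$-level that lies inside the region covered by $\Gin$, and since each cell $\D\in\Gin$ has the property that no plane of $H\setminus H_\D$ meets $\D$, I would process the cells of $\Gin$ independently using only their conflict lists.

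For each cell $\D\in\Gin$, the conflict list has size at most $cbk$. I would apply the shallow-cutting construction to $H_\D$ with parameter $k$. Because the ratio $|H_\D|/k$ is the constant $b$, the cutting produces only $O(b)$ downward tetrahedral sub-cells covering the points in $\D$ of level (with respect to $H_\D$) at most $k$, each intersecting at most $ck$ planes of $H_\D$. Two simple facts complete the structural part: inside $\D$, the level with respect to $H_\D$ equals the level with respect to $H$; and planes of $H\setminus H_\D$ miss $\D$ entirely and hence miss every sub-cell, so each sub-cell's conflict list with respect to the full set $H$ is still bounded by $ck$. Summing over $\Gin$ produces $O(n/(bk))\cdot O(b)=O(n/k)$ sub-cells, giving the bound $c'n/k$ after choosing the constants appropriately.

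For the running time, a naive per-cell invocation of the shallow-cutting algorithm would cost $\sum_{\D}O(|H_\D|\log|H_\D|)=O(n\log k)$, which is already too large. I would instead import, essentially verbatim, a single level of Chan and Tsakalidis's deterministic refinement: starting from $\D$ together with its conflict list, their subdivision primitive produces the $O(b)=O(1)$ children of $\D$ along with their new conflict lists in $O(|H_\D|)$ deterministic time, by computing a constant-complexity partial lower envelope inside $\D$ and recomputing conflict lists by linear scans. Summed over all cells of $\Gin$ this gives $O(n)$ local work. The extra $O((n/k)\log(n/k))$ term is incurred by the global priority-queue/sorted structure that merges and tracks the $O(n/k)$ children, matching the per-level overhead that Chan and Tsakalidis already pay inside their hierarchy.

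The hard part will be verifying that the subdivision primitive, originally tailored to Chan and Tsakalidis's own top-down hierarchy, can be detached and reused on an arbitrary input cutting $\Gin$ while preserving three properties at once: (a)~each child is still a downward tetrahedron containing $(0,0,-\infty)$ and not merely a truncation of one, which may require extending upper triangles downward after clipping against $\D$ without introducing new plane crossings; (b)~the deterministic $O(|H_\D|)$ per-cell bound, which depends on avoiding any sort inside $\D$ and on reusing the given conflict lists directly; and (c)~the correct propagation of conflict lists from $\Gin$ to $\Gout$. These technical adjustments are exactly what the appendix has to carry out.
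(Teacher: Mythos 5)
Your plan is genuinely different from the paper's: you process each cell $\D\in\Gin$ \emph{independently}, building a local shallow cutting of its conflict list $H_\D$, whereas the paper works \emph{globally} in vertex form, adds $\Theta(k)$ artificial copies of the plane through each facet of $\UH(\Vin)$ to boost the level of every point outside $\Gin$, and then invokes (a strengthened form of) Chan--Tsakalidis's single-step refinement on the augmented set. Your structural observation that, inside a downward cell $\D$, level with respect to $H$ equals level with respect to $H_\D$ is correct, and your clip-and-retriangulate repair of the children is plausible. But the per-cell route has a fatal constant blow-up that the paper's global route is specifically engineered to avoid, and it is not in the list of "hard parts" you flagged.

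Concretely: each $\D$ has $|H_\D|\le cbk$, so any shallow cutting of $H_\D$ at level $k$ produces $\Theta(|H_\D|/k)$ sub-cells, and summing gives
\[
\sum_{\D\in\Gin}\Theta\!\left(\tfrac{|H_\D|}{k}\right)=\Theta\!\left(\tfrac{1}{k}\sum_\D|H_\D|\right)=\Theta\!\left(\tfrac{c\,c'\,n}{k}\right),
\]
not the $c'n/k$ the lemma demands. Since the shallow-cutting constant forces $c\ge 1$ (in fact $c\ge$ the universal Shallow Cutting Lemma constant) and the retriangulation adds another factor, no legal choice of $b,c,c'$ makes the output constant collapse back to $c'$. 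This matters because Section~\ref{sec:dch3d} applies the lemma $\ell=\log_b n$ times with the \emph{same} $c'$ at every level; a per-application multiplicative inflation of the cell count destroys the $O(\log n)$-charging argument and the $O(n\log n)$ preprocessing bound. The paper circumvents this precisely via the "stronger statement" in the appendix, whose output-cell constant $C''=2c_0'+\tfrac{8a_0'c_0'CC'}{3c_0\sqrt{t}}$ can be driven down to $\approx 2c_0'$ (independent of $C'$) by choosing the internal parameter $t$ large; that leverage is intrinsically global and has no per-cell analogue. A secondary issue: if your cells really are handled independently by a linear-time local primitive, the total cost would be $O(\sum_\D|H_\D|)=O(n)$ and there is no source for the extra $O((n/k)\log(n/k))$ term you cite; that term in the paper comes from the global sorting/merging machinery in the vertex-form algorithm, which your decomposition discards, so invoking "the global priority queue" is not coherent with the rest of your plan.
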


We now redescribe the author's previous data structure~\cite{SODA06} for 3D extreme point queries, with slight changes to incorporate Lemma~\ref{lem:refine:cut}. % (and Kaplan et al.'s improvement).  
The redescription uses a recursive form of the logarithmic method~\cite{BenSax}, which should be a little easier to understand than the original description.

\begin{theorem}
We can maintain a set of $n$ points in $\R^3$, 
with $O(n\log n)$ preprocessing time, $O(\log^2n)$ amortized insertion time, and $O(\log^4n)$ amortized deletion time, so that
we can answer find the extreme point of the convex hull along
any query direction in
$O(\log^2n)$ time.
\end{theorem}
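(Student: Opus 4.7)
The plan is to work in dual space, so that extreme-point queries on the point set become queries for the lowest plane along a given vertical line on the lower envelope $\LE(H)$ of a dynamic set $H$ of at most $n$ planes in $\R^3$. The data structure recasts the author's SODA06 construction in the recursive form of the logarithmic method: a structure $D(m)$ storing up to $m$ planes consists of a static deletion-only ``block'' $B$ holding at most $m/2$ planes together with a recursive substructure $D(m/2)$ for the remainder. Insertions recurse into $D(m/2)$; whenever $D(m/2)$ overflows, its contents are promoted to rebuild $B$. Deletions are routed to the component containing the element and are handled lazily.

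Each block $B$ of size $m'$ will be represented by a hierarchy of shallow cuttings $\Gamma_0,\Gamma_1,\ldots,\Gamma_{\lceil \log m'\rceil}$, where $\Gamma_i$ is a $(2^i, O(2^i))$-shallow cutting with $O(m'/2^i)$ downward cells; each cell carries a secondary structure (as in~\cite{KapSODA17}) that, within its $O(2^i)$ conflict-list planes and in the presence of some deletions, returns the lowest plane along a query vertical line in $O(\log 2^i)$ time. The crucial new ingredient is Lemma~\ref{lem:refine:cut}: starting from the coarsest, trivial cutting and iteratively refining yields the whole hierarchy in total time
\[
\sum_{i} O\!\bigl(m' + (m'/2^i)\log(m'/2^i)\bigr) \;=\; O(m'\log m'),
\]
saving a $\log m'$ factor compared to constructing each $\Gamma_i$ independently through Lemma~\ref{lem:shacut}.

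A query on $B$ with $d_B$ current deletions chooses the smallest $i$ with $2^i \ge d_B$ (so that $\LE$ of the surviving planes remains covered by $\Gamma_i$), locates the containing cell of $\Gamma_i$ in $O(\log m')$ time by standard point location, and invokes the cell's secondary structure in another $O(\log m')$ time, for a per-block query cost of $O(\log m')$; summed over the $O(\log n)$ recursion levels of $D$, the overall query time is $O(\log^2 n)$. Insertions amortize: the $O(m'\log m')$ cost of rebuilding $B$ spreads over $\Theta(m')$ triggering insertions, contributing $O(\log m')$ per level and $O(\log^2 n)$ in total. The initial preprocessing is a single build at $m=n$ costing $O(n\log n)$.

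The main obstacle is the amortized deletion bound. Following the SODA06 plus Kaplan-et-al.\ scheme, a deletion is first marked inside its block; once enough deletions accumulate in $B$, $B$ is discarded and its live planes are reinserted into the lower levels of the recursion, while the per-cell secondary structures themselves use a further logarithmic-method layer to absorb in-cell deletions. Each plane is thus reinserted $O(\log n)$ times at $O(\log^2 n)$ amortized cost each, and the per-cell bookkeeping contributes an additional logarithmic factor, giving $O(\log^4 n)$ amortized per deletion. The hard part will be verifying that plugging Lemma~\ref{lem:refine:cut} in place of Lemma~\ref{lem:shacut} cleanly saves exactly one log factor relative to the $O(\log^5 n)$ bound of Kaplan et al., without breaking any invariant --- in particular, that the coarser cutting required as input by Lemma~\ref{lem:refine:cut} is always available in the amortized budget (namely, the previous level $\Gamma_{i+1}$ from the same hierarchy), and that no other component of the nested amortization silently becomes the bottleneck.
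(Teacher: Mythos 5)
Your high-level map is right — work in dual space, recast the SODA06 structure as a recursive logarithmic-method data structure with a ``static'' outer block and a recursive remainder, build the block's hierarchy of shallow cuttings bottom-up via the restricted cutting lemma so the whole hierarchy costs $O(n\log n)$ instead of $O(n\log^2 n)$ — and you correctly identified Lemma~\ref{lem:refine:cut} as the one-$\log$-factor saving over Kaplan et al. But the internal mechanics of your block $B$ do not match the paper and, as described, do not work.

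First, you've dropped the pruning step, and it is load-bearing. In the paper each $\Gamma_i$ is a cutting of a \emph{shrinking} set $H_{i-1}$, where $H_i = H_{i-1} \setminus \{h : h \text{ intersects more than } 2cc'\ell \text{ cells of } \Gamma_1\cup\cdots\cup\Gamma_i\}$, and $\Hrem = H - H_\ell$ is fed into the recursion. The charging argument that $|\Hrem|\le n/2$ is exactly what makes the recurrence $P(n)\le P(n/2)+O(n\log n)$ go through, and the pruning also guarantees that every plane still in $H_i$ touches only $O(\log n)$ cells across the whole hierarchy, which is what bounds the work per deletion. Your hierarchy keeps the same set $B$ at every scale, so a single plane can appear in $\Theta(m'/2^i)$ cells of $\Gamma_i$, and you have no handle on the per-deletion cost. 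It is also precisely because the sets shrink that Lemma~\ref{lem:refine:cut} is needed: the plain Chan--Tsakalidis hierarchy for a fixed set already runs in $O(n\log n)$, and the lemma's content is that the refinement still works when you prune between levels.

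Second, your query scheme (``pick the smallest $i$ with $2^i\ge d_B$, point-locate in $\Gamma_i$, invoke a per-cell secondary structure that answers vertical ray shooting among $O(2^i)$ conflict-list planes under deletions'') is not the paper's scheme and is in fact circular: a cell at scale $i$ may have up to $\Theta(2^i)$ of its conflict-list planes deleted, so the ``secondary structure'' is itself a fully dynamic lower-envelope structure on $\Theta(2^i)$ planes — the very object being constructed. The paper avoids this entirely. Each cell $\D\in\Gamma_i$ just carries a deletion counter $k_\D$; once $k_\D\ge n/b^{i+1}$, all live planes of $(H_i)_\D$ are flushed into $\Hrem$. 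A query touches only a single static point-location structure on $H_\ell$ and then recurses into $\Hrem$; the correctness invariant is that whenever the true answer is not found on $\LE(H_\ell)$, the flushing must already have pushed it into $\Hrem$. There is no per-level or per-cell query, and no level selection by deletion count. You would need to replace your block internals with the pruning $+$ counter-flushing mechanism before the $O(\log^4 n)$ deletion bound (via $D(n)\le D(3n/4)+O(\log n)\,I(3n/4)$) can be established.
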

\begin{proof}
We describe our solution in dual space, where we want to answer
vertical ray shooting queries for $\LE(H)$, i.e., find the
lowest plane of $H$ at a query vertical line, for a dynamic set $H$ of 
$n$ planes in $\R^3$.

\medskip\noindent{\bf Preprocessing}.
Our preprocessing algorithm is given by the pseudocode below (ignoring
trivial base cases), with the constants $b,c,c'$ from Lemma~\ref{lem:refine:cut}:\footnote{
Line~4 is where Kaplan et al.'s improvement lies~\cite{KapSODA17}.  The original data structure from~\cite{SODA06} basically had $H_i=H_{i-1} -\{h\in H:\mbox{$h$ intersects more than $2cc'\ell$
cells of $\Gamma_{i}$}\}$.
}

\begin{quote}
\begin{tabbing}
9.M\=\ \ \ \=\ \ \ \=\kill
preprocess($H$):\\[2pt]
%0.\> if $n\le$ a sufficiently large constant then return\\
1.\> $H_0=H$,\ \ $\Gamma_0=\{\R^3\}$,\ \ $\ell=\log_bn$\\
2.\> for $i=1,\ldots,\ell$ do \{\\
3.\>\> $\Gamma_{i}=$ a $\Gamma_{i-1}$-restricted
$(n/b^i, cn/b^i)$-shallow cutting of $H_{i-1}$
with at most $c'b^i$ cells\\
4.
\>\> $H_{i}=H_{i-1}-\{h\in H:\mbox{$h$ intersects more than $2cc'\ell$
cells of $\Gamma_1\cup\cdots\cup\Gamma_{i}$}\}$\\
5.\>\> for each $\D\in\Gamma_{i}$,
compute the conflict list $(H_{i})_\D$ and 
initialize $k_\D=0$\\
  \> \}\\
6.\> preprocess $H_\ell$ for static vertical ray shooting\\
7.\> $\Hrem=H-H_\ell$\\
8.\> preprocess($\Hrem$)
\end{tabbing}
\end{quote}

Note that $\Gamma_{i-1}$ is a $(-\infty,cn/b^{i-1})$-shallow cutting
of $H_{i-2}$, and consequently a $(-\infty,cn/b^{i-1})$-shallow
cutting of $H_{i-1}$, since $H_{i-1}\subseteq H_{i-2}$.  Given
$\Gamma_{i-1}$ and its conflict lists, we can thus apply Lemma~\ref{lem:refine:cut}
to compute $\Gamma_{i}$ and its conflict lists, in $O(n + b^i\log b^i)$
time.  The total time for lines 1--5 is 
$O(n\log n + \sum_{i=1}^{\ell} b^i\log b^i) = O(n\log n)$.
Line~6 takes $O(n\log n)$ time (by a planar point location method~\cite{BerBOOK}).

We claim that $|\Hrem|\le n/2$.  To see this, consider each $h\in \Hrem$.
Let $i$ be the index with $h\in H_{i-1}-H_i$.  Then
$h$ intersects more than $2cc'\ell$ cells
of $\Gamma_1\cup\cdots\cup\Gamma_{i}$; send a charge from $h$ to each
of these cells.  Each cell in $\Gamma_{j}$ receives charges only from planes in $H_{j-1}$ that intersect the cell.  Thus, the total number of charges
is at least $2cc'\ell|\Hrem|$ and is at most $\sum_{j=1}^{\ell} cn/b^j\cdot c'b^j = cc'\ell n$.  The claim follows.
The preprocessing time thus satisfies the recurrence $P(n)\le P(n/2)+O(n\log n)$, which gives $P(n)=O(n\log n)$.
%Space usage satisfies the recurrence $S(n)\le S(n/2)+O(n)$, which gives $S(n)=O(n)$.

\medskip\noindent{\bf Inserting a plane $h$}.
We simply insert $h$ to $\Hrem$ recursively.
When $|\Hrem|$ reaches $3n/4$, we rebuild the data structure for $H$.
It takes $\Omega(n)$ updates for a rebuild to occur.
The amortized insertion time thus satisfies the recurrence
$I(n)\le I(3n/4) + O(P(n)/n) = I(3n/4) + O(\log n)$,
which gives $I(n)=O(\log^2n)$.

\medskip\noindent{\bf Deleting a plane $h$}.  The deletion algorithm
is as follows:
\PAPER{\newpage}
\begin{quote}
\begin{tabbing}
9.M \=\ \ \ \=\ \ \ \=\ \ \ \ \=\kill
delete($H,h$):\\[2pt]
1.\> for $i=1,\ldots,\ell$ do\\
2.\>\> for each $\D\in\Gamma_{i}$ with
$h\in (H_{i})_\D$ do \{\\
3.\>\>\> increment $k_\D$\\
4.\>\>\> if $k_\D \ge n/b^{i+1}$ then\\
5.\>\>\>\> for all $h\in (H_i)_\D$ that are still in $H$ but not yet in $\Hrem$, 
insert $h$ to $\Hrem$\\
\> \}\\
6.\> if $h\in\Hrem$ then delete($\Hrem,h$)
\end{tabbing}
\end{quote}

Let $i$ be the largest index with $h\in H_{i}$.  Then $h$ intersects at most 
$2cc'\ell=O(\log n)$ cells of $\Gamma_1\cup\cdots\cup\Gamma_{i}$.
Thus, in each deletion, lines 3--5 are executed $O(\log n)$ times.

In lines 3--5, it takes $n/b^{i+1}$ increments of $k_\D$
to cause the $|(H_{i})_\D|\le cn/b^{i}$ planes to be inserted to $\Hrem$.
Thus, each increment triggers $O(1)$ amortized number of insertions to $\Hrem$, and so a deletion triggers $O(\log n)$ amortized number of insertions to $\Hrem$.
The amortized deletion time thus satisfies the recurrence
$D(n)\le D(3n/4) + O(\log n) I(3n/4) = D(3n/4) + O(\log^3 n)$,
which gives $D(n)=O(\log^4n)$.

\medskip\noindent{\bf Answering the query for a vertical line $q$}.
We first answer the query for the static set $H_\ell$
in $O(\log n)$ time (by planar point location); if the returned plane has already been deleted, ignore the answer.
We then recursively answer the query for $\Hrem$, and return
the lowest of all the planes found.
The query time satisfies the recurrence $Q(n)\le Q(3n/4)+O(\log n)$,
which gives $Q(n)=O(\log^2n)$.

\medskip\noindent{\bf Correctness of the query algorithm.}
To prove correctness, let $h^*$ be the lowest plane at $q$ and
$v^*=h^*\cap q$.
If $h^*\in\Hrem$, correctness follows by induction.  
So, assume that $h^*\not\in\Hrem$.

If $v^*$ is covered by $\Gamma_\ell$, say, by the cell $\D\in \Gamma_\ell$, then
either $v^*$ is on $\LE(H_\ell)$, in which case the algorithm would have correctly
found $h^*$, or some plane in $(H_\ell)_\D$
has been deleted from $H$, in which case all active planes of $(H_\ell)_\D$, including $h^*$, would have been inserted to $\Hrem$.

Otherwise, let $i$ be an index such that $v^*$ is not covered by $\Gamma_i$ but is covered by $\Gamma_{i-1}$, say, by the cell $\D\in\Gamma_{i-1}$.  Since $\Gamma_{i}$ is a $\Gamma_{i-1}$-restricted
$(n/b^{i}, cn/b^{i})$-shallow cutting of $H_{i-1}$,
it follows that $v^*$ must have level more than $n/b^{i}$ in $H_{i-1}$.
In order for $v^*$ to be the answer,
the more than $n/b^{i}$ planes of $H_{i-1}$ below $v^*$ must
have been deleted from $H$.  But then all active planes of
$(H_{i-1})_\D$, including $h^*$, would have been inserted to $\Hrem$.
\end{proof}

By the standard lifting transformation, we obtain:

\begin{corollary}
We can maintain a set of $n$ points in $\R^2$, 
with $O(n\log n)$ preprocessing time, $O(\log^2n)$ amortized insertion time, and $O(\log^4n)$ amortized deletion time, so that
we can answer find the nearest neighbor to any query point in
$O(\log^2n)$ time.
\end{corollary}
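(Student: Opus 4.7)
The plan is to reduce 2D nearest neighbor search to vertical ray shooting in the lower envelope of a dynamic set of planes in $\R^3$ via the standard lifting transformation, and then apply the theorem verbatim. I would spend essentially no effort on anything beyond verifying that the reduction preserves all four complexity bounds.

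Concretely, I would map each input point $p=(a,b)\in\R^2$ to the plane $h_p$ given by $z=-2ax-2by+a^2+b^2$ in $\R^3$, and let $H=\{h_p:p\in P\}$. For any query point $q=(x_0,y_0)$, the value of $h_p$ at $q$ equals $(a-x_0)^2+(b-y_0)^2-(x_0^2+y_0^2)$, so the plane of $H$ realizing $\LE(H)$ at the vertical line through $q$ corresponds to the point of $P$ minimizing the squared Euclidean distance to $q$. Hence a vertical ray shooting query on $\LE(H)$ returns exactly the nearest neighbor in $P$.

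The reduction is coordinate-by-coordinate and bijective, so inserting or deleting a point of $P$ corresponds to inserting or deleting the single plane $h_p$ in $H$. Consequently the $O(n\log n)$ preprocessing, $O(\log^2 n)$ amortized insertion, $O(\log^4 n)$ amortized deletion, and $O(\log^2 n)$ query bounds from the preceding theorem carry over directly. There is no real obstacle here — the only thing to check is that a single planar point movement corresponds to one plane update in dual space, which is immediate from the lifting map.
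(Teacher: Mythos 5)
Your proposal is correct and matches the paper's proof, which simply invokes the standard lifting transformation to reduce 2D nearest neighbor search to vertical ray shooting on $\LE(H)$ — exactly the query type the preceding theorem's data structure supports directly (in dual space). You have merely spelled out the computation that the paper leaves implicit, and the complexity bounds transfer unchanged since each 2D point corresponds to a single plane.
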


The space usage in the above data structure is $O(n\log n)$, but can
be improved to $O(n)$, by following an idea mentioned in~\cite{SODA06}
(due to Afshani): instead
of storing conflict lists explicitly, generate conflict lists on demand by using a known optimal (static) linear-space data structure for
halfspace range reporting~\cite{AfsCha}.

Following~\cite{SODA06}, we can use the same dynamic data structure to answer other
basic types of 3D convex hull queries, e.g., \emph{gift wrapping queries}
(finding the two tangents of the hull with a query line outside the hull) in
$O(\log^2n)$ time and \emph{line-intersection queries}
(intersecting the hull with a query line) 
in $O(\log^4 n \log^{O(1)}\log n)$ time.  The latter corresponds to
\emph{3D linear programming queries} in dual space.  The dynamic data structure
can be adapted to maintain the \emph{smallest enclosing circle} of a 2D point set.
Following~\cite{SoCG11}, the dynamic data structure can also be adapted to answer 
3D halfspace range reporting queries.

\section{Dynamic 2D Bichromatic Closest Pair}\label{sec:bcp}

We now adapt the data structure in Section~\ref{sec:dch3d} to
solve the dynamic 2D bichromatic closest pair problem:

\begin{theorem}
We can maintain the closest pair between
two dynamic sets $P$ and $Q$ of at most $n$ points in $\R^2$,
with $O(n\log n)$ preprocessing time, $O(\log^2n)$ amortized insertion time, and $O(\log^4n)$ amortized deletion time.
\end{theorem}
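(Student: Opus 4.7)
The plan is to adapt the recursive nearest-neighbor data structure of Section~\ref{sec:dch3d} so that the bichromatic closest pair between $P$ and $Q$ is directly maintained as an invariant, avoiding Eppstein's conga-line reduction entirely. Lift each $q\in Q$ to the plane $h_q\colon z=-2q_xx-2q_yy+q_x^2+q_y^2$ via the standard paraboloid map, so $d(p,q)^2=\|p\|^2+h_q(p)$, and the squared BCP equals $\min_{p\in P}(\|p\|^2+\ell_H(p))$, where $\ell_H(p)$ is the height of $\LE(H)$ at the vertical line through $p$ and $H=\{h_q:q\in Q\}$. I would mirror the hierarchy from Section~\ref{sec:dch3d} on $H$; at every level $j$ with static subset $H^{(j)}_\ell$ (equipped with its planar point location structure on $\LE(H^{(j)}_\ell)$) and recursive residual $\Hrem^{(j)}$, attach to each cell $\Delta$ of the finest shallow cutting $\Gamma^{(j)}_\ell$ a list $P^{(j)}_\Delta$ of the $p\in P$ whose vertical line falls in $\Delta$'s projection, and, for each such $p$, store its locally nearest alive plane $q^{(j)}_p$ in the $O(1)$-sized conflict list $(H^{(j)}_\ell)_\Delta$. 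A priority queue $M^{(j)}$ holds the entry $(d(p,q^{(j)}_p),p)$ for each $p$ whose cell has not been flushed; the running BCP is the minimum, over $j$, of the tops of the $M^{(j)}$'s.

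Insertions and deletions of $p\in P$ are straightforward. Insertion performs one planar point location on $\Gamma^{(j)}_\ell$ at each of the $O(\log n)$ levels to identify the containing $\Delta$, scans the constant-size $(H^{(j)}_\ell)_\Delta$ for the local nearest neighbor, and inserts a single entry into $M^{(j)}$, for $O(\log^2n)$ amortized time; deletion symmetrically erases the $O(\log n)$ entries of $p$. Insertion of $q$ proceeds exactly as in Section~\ref{sec:dch3d}: $h_q$ is percolated into $\Hrem$ recursively at amortized cost $O(\log^2n)$, and its effect on BCP against existing $p$'s is carried either by the recursive substructure on $\Hrem^{(j)}$ (which now contains $h_q$) or, after the next level-$j'$ rebuild that promotes $h_q$ into some $H^{(j')}_\ell$, by the $M^{(j')}$ established during that rebuild, whose cost is absorbed by the amortized preprocessing.

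The main obstacle is the deletion of a $q$. I would follow the flushing mechanism of Section~\ref{sec:dch3d}: the deletion walks over every cell $\Delta$ whose conflict list contains $h_q$---only $O(\log n)$ cells in total, because $H^{(j)}_\ell$ excludes planes that hit more than $2cc'\ell$ cells across the hierarchy---incrementing $k_\Delta$ and flushing $\Delta$ once $k_\Delta$ passes the threshold $n/b^{i+1}$. At the finest level this threshold is $1$, so any deletion of a plane in a level-$\ell$ cell's conflict list immediately flushes the cell, migrating its alive conflict-list planes into $\Hrem^{(j)}$. In parallel with the Section~\ref{sec:dch3d} bookkeeping, I would purge from $M^{(j)}$ every entry from a $p\in P^{(j)}_\Delta$ whose candidate $q^{(j)}_p$ sat in $\Delta$, since $p$'s local nearest neighbor has now moved to $\Hrem^{(j)}$ and is tracked by the recursive substructure. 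To bound the amortized cost of this purging, each $p$ is pruned from a fixed $M^{(j)}$ at most once between two consecutive rebuilds of level $j$, and a further vertical-cut subdivision of the $\Gamma^{(j)}_\ell$ cells---in the same style as in Section~\ref{sec:sublin2}---keeps $|P^{(j)}_\Delta|$ bounded, so that the purge cost is folded into the same recurrence $D(n)\le D(3n/4)+O(\log^3n)$ as in Section~\ref{sec:dch3d}, giving $D(n)=O(\log^4n)$.

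Correctness mirrors the argument from Section~\ref{sec:dch3d}: for the true BCP pair $(p^*,q^*)$, let $j^*$ be the least level with $h_{q^*}\in H^{(j^*)}_\ell$ still alive; by the conflict-list property of the shallow cutting, $h_{q^*}\in(H^{(j^*)}_\ell)_\Delta$ for the cell $\Delta$ containing the vertex $(p^*_x,p^*_y,h_{q^*}(p^*))$. Either $\Delta$ is unflushed, in which case $q^{(j^*)}_{p^*}=q^*$ by the minimality of $d(p^*,q^*)$ and the entry $(d(p^*,q^*),p^*)$ lies in $M^{(j^*)}$; or $\Delta$ has been flushed and $h_{q^*}$ has already been promoted into $\Hrem^{(j^*)}$, in which case the inductive hypothesis on the recursive substructure finishes. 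Putting the pieces together yields $O(n\log n)$ preprocessing, $O(\log^2n)$ amortized insertion, and $O(\log^4n)$ amortized deletion for dynamic 2D bichromatic closest pair.
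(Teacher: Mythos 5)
Your high-level goal — modify the Section~\ref{sec:dch3d} structure to track BCP directly, bypassing Eppstein's conga-line reduction — matches the paper's, but the mechanism is asymmetric where the paper's is symmetric, and this asymmetry creates two gaps. The paper lifts \emph{both} $P$ and $Q$ to plane sets $H$ and $J$, runs a Section~\ref{sec:dch3d}-style hierarchy on each, recurses on $(\Hrem,\Jrem)$ \emph{jointly}, and populates a single heap with $f(\lambda_{H_\ell}(q))$ for each $q$ and $f(\lambda_{J_\ell}(p))$ for each $p$ at every recursion depth; correctness then splits on whether $h_{p^*}\in\Hrem$ or $h_{q^*}\in\Jrem$. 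You lift only $Q$, keep $P$ as raw points bucketed into cells of the finest cutting, and recurse only on $\Hrem$.

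The first gap is in the rebuild accounting. At recursion depth $j$ the hierarchy is built on $n_j\approx(3/4)^j n$ planes of $H^{(j)}$ and is rebuilt after $\Theta(n_j)$ updates, but you must redistribute \emph{all} of $P$ (still $n$ points) into the new $\Gamma^{(j)}_\ell$ and recompute their local nearest neighbors, an $\Omega(n)$ cost. The amortized rebuild cost at depth $j$ is therefore $\Omega(n/n_j)$, which is polynomial rather than polylogarithmic for $j=\Theta(\log n)$; saying it is ``absorbed by the amortized preprocessing'' assumes the recurrence $P(n)\le P(n/2)+O(n\log n)$, which no longer holds because the $P$-side does not shrink with depth. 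The vertical-cut subdivision you invoke makes this worse, not better, since it forces $\Theta(n)$ cells at every depth. In the paper's symmetric recursion both $\Hrem$ and $\Jrem$ shrink by a constant factor, so the rebuild cost tracks the number of updates.

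The second gap is correctness between flushes and rebuilds. Suppose $h_{q^*}\in (H^{(j^*)}_\ell)_\D$ and some \emph{other} plane of $(H^{(j^*)}_\ell)_\D$ is deleted: $\D$ is flushed, $h_{q^*}$ (still alive) is re-inserted into $\Hrem^{(j^*)}$, and you purge $p^*$'s entry from $M^{(j^*)}$. But the recursive substructure's $M^{(j')}$ heaps at depths $j'>j^*$ were computed against the \emph{static} $H^{(j')}_\ell$ built at the last rebuild; $h_{q^*}$ will only land in some $H^{(j')}_\ell$ at the \emph{next} rebuild. Until then no heap holds $(d(p^*,q^*),p^*)$, so the reported minimum is strictly larger than the true BCP. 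The paper sidesteps this precisely because it also maintains the $P$-side hierarchy: when $h_q$ is (re-)inserted at a level, it immediately gets an entry $f(\lambda_{H_\ell}(q))$ computed by point location on that level's static $\LE(H_\ell)$, and the two sides cover each other in the correctness argument (``WLOG $h_{p^*}\not\in\Hrem$''). Without a $P$-side plane hierarchy there is no structure against which to compute a fresh entry for a re-inserted $h_q$. Fixing either gap essentially forces you back to lifting $P$ and recursing symmetrically, which is the paper's proof.
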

\begin{proof}
By the standard lifting transformation,
map each input point $p=(a,b)$ to the plane $h_p$ with equation $z=-2ax-2by+a^2+b^2$
in $\R^3$.  Let $H=\{h_p:p\in P\}$.
For each point $q\in Q$, let $\lambda_H(q)$ denote the point on $\LE(H)$ at the vertical line at $q$.  
Let $J=\{h_q:q\in Q\}$.
For each point $p\in P$, define $\lambda_J(p)$ similarly.
We want to compute the minimum of $f(\lambda_H(q))$ over all $q\in Q$,
where $f(x,y,z)=x^2+y^2+z$, which is equivalent to the minimum of 
$f(\lambda_J(p))$ over all $p\in P$.

\medskip\noindent{\bf Preprocessing}.
We maintain a global heap, whose minimum gives the answer.
We modify the preprocess($H$) algorithm in Section~\ref{sec:dch3d}:
\begin{quote}
\begin{tabbing}
9.M\=\ \ \ \=\ \ \ \=\kill
preprocess($H,J$):\\[2pt]
1.\> run lines 1--7 of the preprocess($H$) algorithm 
on $H$\\
2.\> for each $h_q\in J$, add $f(\lambda_{H_\ell}(q))$ to the heap\\
3.\> run lines 1--7 of the preprocess($H$) algorithm 
but with $H$'s replaced by $J$'s\\
4.\> for each $h_p\in H$, add $f(\lambda_{J_\ell}(p))$ to the heap\\
5.\> preprocess($\Hrem,\Jrem$)
\end{tabbing}
\end{quote}

As in Section~\ref{sec:dch3d}, the preprocessing time satisfies the recurrence
$P(n)\le P(n/2)+O(n\log n)$, which gives $P(n)=O(n\log n)$.

\medskip\noindent{\bf Inserting a plane $h_p$ to $H$}.
We recursively insert $h_p$ to $\Hrem$. 
We also compute $\lambda_{J_\ell}(p)$ in $O(\log n)$ time (by planar point location), and add $f(\lambda_{J_\ell}(p))$ to the heap.
  
When $|\Hrem|$ or $|\Jrem|$ reaches $3n/4$, we rebuild the data structure for $H$ and $J$.
It takes $\Omega(n)$ updates for a rebuild to occur.
The amortized insertion time thus satisfies the recurrence
$I(n)\le I(3n/4) + O(\log n) + O(P(n)/n) = I(3n/4) + O(\log n)$,
which gives $I(n)=O(\log^2n)$.

\medskip\noindent{\bf Inserting a plane $h_q$ to $J$}.
Symmetric to the above.

\medskip\noindent{\bf Deleting a plane $h_p$ from $H$}.
We run lines 1--5 of the delete($H,h$) algorithm in Section~\ref{sec:dch3d} (with $h=h_p$).
In the heap, we remove all entries $f(\lambda_{H_\ell}(q))$ that has
$\lambda_{H_\ell}(q)=\lambda_{\{h_p\}}(q)$.  
If $h_p\in \Hrem$, we further recursively delete $h_p$ from $\Hrem$.
We also remove $f(\lambda_{J_\ell}(p))$ from the heap.

For the analysis, we can charge removals of entries from the heap to their corresponding insertions, by
amortization. 
The amortized deletion time thus satisfies the recurrence
$D(n)\le D(3n/4) + O(\log n) I(3n/4) = D(3n/4) + O(\log^3 n)$,
which gives $D(n)=O(\log^4n)$.

\medskip\noindent{\bf Deleting a plane $h_q$ from $J$}.
Symmetric to the above.

\medskip\noindent{\bf Correctness}.
Let $p^*q^*$ be the closest pair with $p^*\in P$ and $q^*\in Q$.
If both $h_{p^*}\in \Hrem$ and $h_{q^*}\in \Jrem$, correctness follows
by induction.
Otherwise, assume without loss of generality that $h_{p^*}\not\in \Hrem$.  
(The case $J_{q^*}\not\in \Jrem$ is symmetric.)  Let $v^*=\lambda_H(q^*)$.
The rest of the correctness argument is essentially identical to that in Section~\ref{sec:dch3d}:

If $v^*$ is covered by $\Gamma_\ell$, say, by the cell $\D\in \Gamma_\ell$, then
either $v^*$ is on $\LE(H_\ell)$, in which case the algorithm would have included $f(\lambda_H(q^*))$ in the heap, or some plane in $(H_\ell)_\D$
has been deleted from $H$, in which case all active planes of $(H_\ell)_\D$, including $h_{p^*}$, would have been inserted to $\Hrem$.

Otherwise, let $i$ be an index such that $v^*$ is not covered by $\Gamma_i$ but is covered by $\Gamma_{i-1}$, say, by the cell $\D\in\Gamma_{i-1}$.  Since $\Gamma_{i}$ is a $\Gamma_{i-1}$-restricted
$(n/b^{i}, cn/b^{i})$-shallow cutting of $H_{i-1}$,
it follows that $v^*$ must have level more than $n/b^{i}$ in $H_{i-1}$.
In order for $v^*$ to be the answer,
the more than $n/b^{i}$ planes of $H_{i-1}$ below $v^*$ must
have been deleted from $H$.  But then all active planes of
$(H_{i-1})_\D$, including $h_{p^*}$, would have been inserted to $\Hrem$.
\end{proof}

We can similarly solve the diameter problem, by replacing min with max and lower with upper envelopes:

\begin{theorem}
We can maintain the diameter of a dynamic set of $n$ points in $\R^2$, 
with $O(n\log n)$ preprocessing time, $O(\log^2n)$ amortized insertion time, and $O(\log^4n)$ amortized deletion time. 
\end{theorem}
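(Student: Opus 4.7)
The plan is to adapt the proof of the dynamic bichromatic closest pair theorem in Section~\ref{sec:bcp} to a single set, swapping lower envelopes for upper envelopes and minimization for maximization. Apply the standard lifting transformation: each $p=(a,b)\in P$ maps to the plane $h_p$ with equation $z=-2ax-2by+a^2+b^2$; write $H=\{h_p:p\in P\}$, $f(x,y,z)=x^2+y^2+z$, and $\lambda^{\mathrm{up}}_H(q)$ for the point on $\UH(H)$ on the vertical line through $q$. Then $f(\lambda^{\mathrm{up}}_H(q))$ is the squared distance from $q$ to its farthest point in $P$, so the diameter squared equals $\max_{p\in P} f(\lambda^{\mathrm{up}}_H(p))$, which I would maintain as the top of a global max-heap.

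First I would build a hierarchy of shallow cuttings $\Gamma_1,\ldots,\Gamma_\ell$ for $\UH(H)$ via the preprocess($H$) scheme of Section~\ref{sec:dch3d}. The upper-envelope version is obtained by reflecting $z\mapsto -z$: under this reflection Lemma~\ref{lem:refine:cut} and all the bookkeeping (sets $H_i\subseteq H_{i-1}$, bad set $\Hrem$, per-cell counters $k_\D$) carry over unchanged, since Chan--Tsakalidis's algorithm is symmetric in $z$.

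Next I would modify the three operations of Section~\ref{sec:bcp} to handle a single set. For preprocessing, run lines 1--7 of preprocess($H$) from Section~\ref{sec:dch3d}, then for each $h_p\in H$ compute $\lambda^{\mathrm{up}}_{H_\ell}(p)$ by planar point location in $O(\log n)$ time, insert $f(\lambda^{\mathrm{up}}_{H_\ell}(p))$ into the heap, and finally recurse on $\Hrem$. To insert $p$: recursively insert $h_p$ into $\Hrem$ and add $f(\lambda^{\mathrm{up}}_{H_\ell}(p))$ to the heap. To delete $p$: run lines 1--5 of delete($H,h_p$); remove from the heap both the entry for $h_p$ itself and every entry $f(\lambda^{\mathrm{up}}_{H_\ell}(q))$ whose witness plane is $h_p$ (those $q$ with $\lambda^{\mathrm{up}}_{H_\ell}(q)=\lambda^{\mathrm{up}}_{\{h_p\}}(q)$); if $h_p\in\Hrem$, recurse. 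Each removal from the heap is charged back to the corresponding insertion, so the recurrences $P(n)\le P(n/2)+O(n\log n)$, $I(n)\le I(3n/4)+O(\log n)$, and $D(n)\le D(3n/4)+O(\log^3 n)$ from Section~\ref{sec:bcp} reproduce verbatim and yield the claimed bounds.

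For correctness, let $\{p^*,q^*\}$ realize the diameter. If both $h_{p^*},h_{q^*}\in\Hrem$, induction on the recursive call finishes. Otherwise the roles of $p^*$ and $q^*$ in the unordered pair are symmetric, so assume without loss of generality that $h_{q^*}\not\in\Hrem$; then $h_{q^*}$ is still active in $H_\ell$. Setting $v^*=\lambda^{\mathrm{up}}_H(p^*)$, which lies on $h_{q^*}$, the rest is a direct translation of the Section~\ref{sec:dch3d}/\ref{sec:bcp} argument: if $v^*$ is covered by $\Gamma_\ell$, either $v^*$ lies on $\UH(H_\ell)$ (so the heap entry for $h_{p^*}$ equals the diameter squared) or the cell of $\Gamma_\ell$ containing $v^*$ has been triggered, which would have forced $h_{q^*}$ into $\Hrem$, a contradiction; otherwise $v^*$ exits the hierarchy at some level $i$, so more than $n/b^i$ planes of $H_{i-1}$ above $v^*$ must have been deleted, again triggering a cell containing $h_{q^*}$. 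The only step I anticipate needing minor care is the reflected form of Lemma~\ref{lem:refine:cut}, but as noted this is immediate.
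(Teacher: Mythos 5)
Your proposal is correct and takes essentially the same approach as the paper, whose entire proof of this theorem is the one-line remark that the bichromatic closest-pair data structure of Section~\ref{sec:bcp} carries over after replacing min with max and lower with upper envelopes. Your spelled-out adaptation---collapsing to a single shallow-cutting hierarchy since $P=Q$, keeping a max-heap of $f(\lambda^{\mathrm{up}}_{H_\ell}(p))$ keyed by $p$ with removal on deletion of either $p$ or its witness plane, and repeating the $\Gamma_\ell$/$\Gamma_i$ case analysis verbatim under the $z\mapsto -z$ reflection---is exactly what the paper intends.
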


\LIPICS{\subparagraph*{Acknowledgement.}}
\PAPER{\paragraph{Acknowledgement.}}
I thank Sariel Har-Peled for discussions on other problems that indirectly led to the results of this paper.  Thanks also to Mitchell Jones for discussions on range searching for points in convex position.

{\small
\bibliographystyle{plainurl}%{abbrv}
\bibliography{dyn_sha_cut}

\begin{thebibliography}{10}

\bibitem{AfsCha}
Peyman Afshani and Timothy~M. Chan.
\newblock Optimal halfspace range reporting in three dimensions.
\newblock In {\em Proc. 20th {ACM-SIAM} Sympos. Discrete Algorithms (SODA)},
  pages 180--186, 2009.
\newblock URL: \url{http://dl.acm.org/citation.cfm?id=1496770.1496791}.

\bibitem{AgaSURV}
Pankaj~K. Agarwal.
\newblock Simplex range searching and its variants: {A} review.
\newblock In M.~Loebl, J.~Ne\v{s}etril, and R.~Thomas, editors, {\em Journal
  through Discrete Mathematics}. Springer, to appear.

\bibitem{AgaMat94}
Pankaj~K. Agarwal and Ji{\v{r}}{\'{\i}} Matou{\v{s}}ek.
\newblock On range searching with semialgebraic sets.
\newblock {\em Discrete Comput. Geom.}, 11:393--418, 1994.
\newblock \href {http://dx.doi.org/10.1007/BF02574015}
  {\path{doi:10.1007/BF02574015}}.

\bibitem{AgaMat95}
Pankaj~K. Agarwal and Ji{\v{r}}{\'{\i}} Matou{\v{s}}ek.
\newblock Dynamic half-space range reporting and its applications.
\newblock {\em Algorithmica}, 13(4):325--345, 1995.
\newblock \href {http://dx.doi.org/10.1007/BF01293483}
  {\path{doi:10.1007/BF01293483}}.

\bibitem{BenSax}
Jon~Louis Bentley and James~B. Saxe.
\newblock Decomposable searching problems {I:} static-to-dynamic
  transformation.
\newblock {\em J. Algorithms}, 1(4):301--358, 1980.
\newblock \href {http://dx.doi.org/10.1016/0196-6774(80)90015-2}
  {\path{doi:10.1016/0196-6774(80)90015-2}}.

\bibitem{BroJac}
Gerth~St{\o}lting Brodal and Riko Jacob.
\newblock Dynamic planar convex hull.
\newblock In {\em Proc. 43rd Sympos. Found. Comput. Sci. (FOCS)}, pages
  617--626, 2002.
\newblock \href {http://dx.doi.org/10.1109/SFCS.2002.1181985}
  {\path{doi:10.1109/SFCS.2002.1181985}}.

\bibitem{FOCS99}
Timothy~M. Chan.
\newblock Dynamic planar convex hull operations in near-logarithmaic amortized
  time.
\newblock {\em J. {ACM}}, 48(1):1--12, 2001.
\newblock Preliminary version in FOCS 1999.
\newblock \href {http://dx.doi.org/10.1145/363647.363652}
  {\path{doi:10.1145/363647.363652}}.

\bibitem{SoCG01}
Timothy~M. Chan.
\newblock A fully dynamic algorithm for planar width.
\newblock {\em Discrete Comput. Geom.}, 30(1):17--24, 2003.
\newblock Preliminary version in SoCG 2001.
\newblock \href {http://dx.doi.org/10.1007/s00454-003-2923-8}
  {\path{doi:10.1007/s00454-003-2923-8}}.

\bibitem{SODA02}
Timothy~M. Chan.
\newblock Semi-online maintenance of geometric optima and measures.
\newblock {\em {SIAM} J. Comput.}, 32(3):700--716, 2003.
\newblock Preliminary version in SODA 2002.
\newblock \href {http://dx.doi.org/10.1137/S0097539702404389}
  {\path{doi:10.1137/S0097539702404389}}.

\bibitem{SODA06}
Timothy~M. Chan.
\newblock A dynamic data structure for 3-d convex hulls and 2-d nearest
  neighbor queries.
\newblock {\em J. {ACM}}, 57(3):16:1--16:15, 2010.
\newblock Preliminary version in SODA 2006.
\newblock \href {http://dx.doi.org/10.1145/1706591.1706596}
  {\path{doi:10.1145/1706591.1706596}}.

\bibitem{SoCG10}
Timothy~M. Chan.
\newblock Optimal partition trees.
\newblock {\em Discrete Comput. Geom.}, 47(4):661--690, 2012.
\newblock Preliminary version in SoCG 2010.
\newblock \href {http://dx.doi.org/10.1007/s00454-012-9410-z}
  {\path{doi:10.1007/s00454-012-9410-z}}.

\bibitem{SoCG11}
Timothy~M. Chan.
\newblock Three problems about dynamic convex hulls.
\newblock {\em Int. J. Comput. Geom. Appl.}, 22(4):341--364, 2012.
\newblock Preliminary version in SoCG 2011.
\newblock \href {http://dx.doi.org/10.1142/S0218195912600096}
  {\path{doi:10.1142/S0218195912600096}}.

\bibitem{CLP}
Timothy~M. Chan, Kasper~Green Larsen, and Mihai P\u{a}tra\c{s}cu.
\newblock Orthogonal range searching on the {RAM}, revisited.
\newblock In {\em Proc. 27th {ACM} Sympos. Comput. Geom. (SoCG)}, pages 1--10,
  2011.
\newblock \href {http://dx.doi.org/10.1145/1998196.1998198}
  {\path{doi:10.1145/1998196.1998198}}.

\bibitem{CPR}
Timothy~M. Chan, Mihai P\u{a}tra\c{s}cu, and Liam Roditty.
\newblock Dynamic connectivity: Connecting to networks and geometry.
\newblock {\em {SIAM} J. Comput.}, 40(2):333--349, 2011.
\newblock Preliminary version in FOCS 2008.
\newblock \href {http://dx.doi.org/10.1137/090751670}
  {\path{doi:10.1137/090751670}}.

\bibitem{ChaTsa}
Timothy~M. Chan and Konstantinos Tsakalidis.
\newblock Optimal deterministic algorithms for 2-d and 3-d shallow cuttings.
\newblock {\em Discrete Comput. Geom.}, 56(4):866--881, 2016.
\newblock Preliminary version in SoCG 2015.
\newblock \href {http://dx.doi.org/10.1007/s00454-016-9784-4}
  {\path{doi:10.1007/s00454-016-9784-4}}.

\bibitem{BerBOOK}
Mark de~Berg, Otfried Cheong, Marc~J. van Kreveld, and Mark~H. Overmars.
\newblock {\em Computational Geometry: Algorithms and Applications}.
\newblock Springer, 3rd edition, 2008.

\bibitem{DobSur}
David~P. Dobkin and Subhash Suri.
\newblock Maintenance of geometric extrema.
\newblock {\em J. {ACM}}, 38(2):275--298, 1991.
\newblock \href {http://dx.doi.org/10.1145/103516.103518}
  {\path{doi:10.1145/103516.103518}}.

\bibitem{Epp}
David Eppstein.
\newblock Dynamic {E}uclidean minimum spanning trees and extrema of binary
  functions.
\newblock {\em Discrete Comput. Geom.}, 13:111--122, 1995.
\newblock \href {http://dx.doi.org/10.1007/BF02574030}
  {\path{doi:10.1007/BF02574030}}.

\bibitem{Epp00}
David Eppstein.
\newblock Fast hierarchical clustering and other applications of dynamic
  closest pairs.
\newblock {\em {ACM} Journal of Experimental Algorithmics}, 5:1, 2000.
\newblock \href {http://dx.doi.org/10.1145/351827.351829}
  {\path{doi:10.1145/351827.351829}}.

\bibitem{HolESA15}
Jacob Holm, Eva Rotenberg, and Christian Wulff{-}Nilsen.
\newblock Faster fully-dynamic minimum spanning forest.
\newblock In {\em Proc. 23rd European Sympos. Algorithms (ESA)}, pages
  742--753, 2015.
\newblock \href {http://dx.doi.org/10.1007/978-3-662-48350-3_62}
  {\path{doi:10.1007/978-3-662-48350-3_62}}.

\bibitem{KapSODA17}
Haim Kaplan, Wolfgang Mulzer, Liam Roditty, Paul Seiferth, and Micha Sharir.
\newblock Dynamic planar {V}oronoi diagrams for general distance functions and
  their algorithmic applications.
\newblock In {\em Proc. 28th {ACM--SIAM} Sympos. Discrete Algorithms (SODA)},
  pages 2495--2504, 2017.
\newblock \href {http://dx.doi.org/10.1137/1.9781611974782.165}
  {\path{doi:10.1137/1.9781611974782.165}}.

\bibitem{Mat92}
Ji{\v{r}}{\'{\i}} Matou{\v{s}}ek.
\newblock Efficient partition trees.
\newblock {\em Discrete Comput. Geom.}, 8:315--334, 1992.
\newblock \href {http://dx.doi.org/10.1007/BF02293051}
  {\path{doi:10.1007/BF02293051}}.

\bibitem{MatCGTA92}
Ji{\v{r}}{\'{\i}} Matou{\v{s}}ek.
\newblock Reporting points in halfspaces.
\newblock {\em Comput. Geom. Theory Appl.}, 2:169--186, 1992.
\newblock \href {http://dx.doi.org/10.1016/0925-7721(92)90006-E}
  {\path{doi:10.1016/0925-7721(92)90006-E}}.

\bibitem{OveBOOK}
Mark~H. Overmars.
\newblock {\em The Design of Dynamic Data Structures}, volume 156 of {\em
  Lecture Notes in Computer Science}.
\newblock Springer, 1983.
\newblock \href {http://dx.doi.org/10.1007/BFb0014927}
  {\path{doi:10.1007/BFb0014927}}.

\bibitem{OveLee}
Mark~H. Overmars and Jan van Leeuwen.
\newblock Maintenance of configurations in the plane.
\newblock {\em J. Comput. Syst. Sci.}, 23(2):166--204, 1981.
\newblock \href {http://dx.doi.org/10.1016/0022-0000(81)90012-X}
  {\path{doi:10.1016/0022-0000(81)90012-X}}.

\bibitem{Ram}
Edgar~A. Ramos.
\newblock On range reporting, ray shooting and \emph{k}-level construction.
\newblock In {\em Proc. 15th Sympos. Comput. Geom. (SoCG)}, pages 390--399,
  1999.
\newblock \href {http://dx.doi.org/10.1145/304893.304993}
  {\path{doi:10.1145/304893.304993}}.

\bibitem{ShaZab}
Micha Sharir and Shai Zaban.
\newblock Output-sensitive tools for range searching in higher dimensions,
  2013.
\newblock URL: \url{http://www.cs.tau.ac.il/~michas/shai.pdf}.

\end{thebibliography}
}

\PAPER{

\appendix
\section{Proof of Lemma~\ref{lem:refine:cut}}

%\begin{proof}
As in the previous paper~\cite{ChaTsa}, it is more convenient to
work with shallow cuttings in vertex form:
given a set $H$ of $n$ planes in $\R^3$,
a \emph{$(k,K)$-shallow cutting in vertex form} is 
a set $V$ of points whose upper hull $\UH(V)$ covers all points in $\R^3$ of level
at most $k$, such that every point in $V$ has level at most $K$.
The \emph{conflict list} of a point refers to the list of all planes of $H$
below the point.  

Chan and Tsakalidis \cite[Theorem 5]{ChaTsa}
proved the following statement for some constants $B$, $C$, and $C'$:
\begin{quote}
For a set $H$ of at most $n$ planes in $\R^3$ and a parameter
$k\in [1,n]$, given a $(Bk, CBk)$-shallow cutting
$\Vin$ in vertex form with at most $C'n/(Bk)$ vertices, together
with their conflict lists, 
we can construct a $(k,Ck)$-shallow cutting
$\Vout$ in vertex form with at most $C'n/k$ vertices, together
with their conflict lists,
in $O(n + (n/k)\log (n/k))$ deterministic time.
\end{quote}

By a close inspection of their proof, we actually get
the following stronger statement for some absolute constants $a_0'$, $c_0$, and $c_0'$,
for \emph{any} choice of constants $B$, $C'$, and $t$:
\begin{quote}
For a set $H$ of at most $n$ planes in $\R^3$ and a parameter
$k\in [1,n]$, 
given a $(12c_0^2k, CBk)$-shallow cutting
$\Vin$ in vertex form with at most $C'n/(Bk)$ vertices, together
with their conflict lists, 
we can construct a $(k,Ck)$-shallow cutting
$\Vout$ in vertex form with at most 
$C''n/k$ vertices, together
with their conflict lists,
in $O(n + (n/k)\log (n/k))$ deterministic time, where
$C=12c_0^2+1$ and $C''= 2c_0' + \frac{8a_0'c_0'CC'}{3c_0\sqrt{t}}$.
\end{quote}

\newcommand{\HHH}{\widehat{H}}
\newcommand{\nnn}{\widehat{n}}
Set $b=B/2$, $c=3C$, and $c'=C'/36$.
To derive Lemma~\ref{lem:refine:cut} from the above statement,
we first convert $\Gin$ to vertex form, by letting $\Vin$ to be
the vertices of the upper hull of the cells in $\Gin$.  
Then $|\Vin|\le 3c'n/(bk)$.  It is helpful to
assume that each vertex in $\Vin$ has degree 3 in the upper hull;
this can be guaranteed by intersecting the upper hull
with extra planes infinitesimally close to each vertex (the number of new vertices is equal to
twice the number of old edges, i.e., at most 6 times the number of old vertices).  After this modification, $|\Vin|\le 18c'n/(bk)=C'n/(Bk)$.

We can't apply the above result to $H$ directly.  Instead,
we make $12c_0^2k$ copies of the plane 
through each facet of $\UH(\Vin)$, and add them to $H$.
Since there are at most $2|\Vin|$ such facets, the new set $\HHH$ has size $\nnn\le n + (12c_0^2k)C'n/(Bk)\le 2n$, by setting $B= 24c_0^2C'$.  Then
$\UH(\Vin)$ covers all points of level at most $12c_0^2k$ in $\HHH$.
Each point in $\Vin$ has level at most $Cbk + 3(12c_0^2k)\le CBk$ in $\HHH$,
assuming $B\ge 72c_0^2$.  

We can now apply the above to $\HHH$, and
obtain a $(k,Ck)$-shallow cutting $\Vout$ for $\HHH$ in vertex form.
We can set $\Gout$ to be the vertical decomposition of $\UH(\Vout)$
(which has at most $2|\Vout|$ facets).
Each cell of $\Gout$ intersects at most $3Ck=ck$ planes of $H$.    Every point covered by $\Gin$ with level at most $k$
in $H$ has level at most $k$ in $\HHH$ and is thus covered by $\Gout$.
Furthermore, $|\Gout|\le 2C''\nnn/k\le 4C''n/k=
4(2c_0'+ \frac{8a_0'c_0'CC'}{3c_0\sqrt{t}})n/k\le 12c_0'n/k$
by setting $t=(\frac{8a_0'CC'}{3c_0})^2$.
Thus, $|\Gout|\le C'n/k$ by setting
$C'=36\cdot 12c_0'$.
\qed
%\end{proof}

}

\end{document}